\newif\ifmai\maifalse
\newif\iffull\fulltrue
\documentclass[a4paper,fleqn]{cas-sc}

\usepackage{float}

\usepackage{algorithmicx, algpseudocode}
\usepackage{algorithm}
\usepackage{amsmath,empheq}
\makeatletter
\def\maketag@@@#1{\hbox{\m@th\normalfont\normalsize#1}}
\makeatother
\usepackage{listings}
\usepackage{tikz}
\usepackage{yfonts}
\usepackage{fancyvrb}
\usepackage{subcaption}
\usepackage{tikz}
\usepackage{url}
\usetikzlibrary{arrows.meta,automata,positioning}
\usepackage[many]{tcolorbox}
\tcbset{colframe=black!70,colback=white,boxrule=0.4pt,arc=2pt,boxsep=2pt,left=2pt,right=2pt,top=1pt,bottom=1pt}
\usepackage{xcolor}
\newdefinition{definition}{Definition}
\newdefinition{remark}{Remark}
\newtheorem{proposition}{Proposition} 
\newdefinition{example}{Example}
\AtBeginDocument{
	\lstset{
		frame=none,
		backgroundcolor={},
		aboveskip=5pt,
		belowskip=5pt
	}
}

\usepackage{chngcntr}
\usepackage{apptools}

\newtheorem{theorem}{Theorem}
\newtheorem{lemma}[theorem]{Lemma}
\newcommand{\Adj}{\mathrm{adj}}
\newcommand{\wt}{\mathrm{w}}
\newcommand{\is}{\mathrm{is}}
\newcommand{\fs}{\mathrm{fs}}
\newcommand{\ia}{\mathrm{ia}}
\newcommand{\fa}{\mathrm{fa}}
\newcommand{\Hcal}{{\mathsf{F}}}

\newcommand{\sPGF}{{\mathsf{sD}}}

\newcommand{\N}{\mathbb{N}}
\newcommand{\Z}{\mathbb{Z}}
\newcommand{\QQ}{\mathbb{Q}}
\newcommand{\reals}{\mathbb{R}}
\newcommand{\One}{\mathbf{1}}
\newcommand{\Zero}{\mathbf{0}}
\newcommand{\J}{\mathcal{S}}
\newcommand{\A}{\mathcal{A}}
\newcommand{\TT}{\mathcal{T}}
\newcommand{\C}{\mathbb{C}}
\newcommand{\pGC}{\mbox{$\mathrm{pGCL}$}}
\usetikzlibrary{automata,positioning}
\newcommand{\vsp}{\vspace*{.5cm}}
\newcommand{\Num}{T}
\newenvironment{proof_of}[1]{\noindent{\textbf{\textsf{Proof of   {#1}.}}}~}{ \hfill $\blacktriangleleft$ \vskip .2mm}
\newcommand{\mik}[1]{}
\newcommand{\replace}[1]{#1}

\makeatletter
\newcommand{\pushright}[1]{\ifmeasuring@#1\else\omit\hfill$\displaystyle{#1}$\fi\ignorespaces}
\newcommand{\pushleft}[1]{\ifmeasuring@#1\else\omit$\displaystyle{#1}$\hfill\fi\ignorespaces}
\makeatother

\usepackage{chngcntr}
\usepackage{apptools}

\newlength{\continueindent}
\usepackage{etoolbox}
\makeatletter
\newcommand*{\ALG@customparshape}{\parshape 2 \leftmargin \linewidth \dimexpr\ALG@tlm+\continueindent\relax \dimexpr\linewidth+\leftmargin-\ALG@tlm-\continueindent\relax}
\apptocmd{\ALG@beginblock}{\ALG@customparshape}{}{\errmessage{failed to patch}}
\makeatother

\usepackage[numbers]{natbib}

\def\tsc#1{\csdef{#1}{\textsc{\lowercase{#1}}\xspace}}
\tsc{WGM}
\tsc{QE}

\newproof{proof}{Proof}

\begin{document}
\let\WriteBookmarks\relax
\def\floatpagepagefraction{1}
\def\textpagefraction{.001}

\shorttitle{Algebraic analysis of probabilistic programs}

\shortauthors{}

\title[mode = title]{On the algebraic analysis of runtime distribution  of probabilistic programs}

%

\author[1]{Michele Boreale}

\cormark[1]


\ead{michele.boreale@unifi.it}

\ead[url]{}



\author[1]{Luisa Collodi}


\ead{luisa.collodi@unifi.it}

\ead[url]{}



\author[1]{Alessandro {Pompa Di Gregorio}}


\ead{alessandro.pompa@edu.unifi.it }

\ead[url]{}


\affiliation[1]{organization={Università  degli Studi di Firenze},
	country={Italy},}

\cortext[1]{Corresponding author}

\fntext[1]{}


\begin{abstract}
We present an algebraic method for analyzing probabilistic programs with counters and discrete states,
\emph{Generalized Constant Probability (GCP)} programs. We define the    operational semantics of GCP in terms of
the   runs of a type of probabilistic pushdown automata (pPDAs).  We characterize the resulting
(sub-)probability generating function (pgf) $\Delta(z)$ as an  {algebraic} function, representable via the roots
of a \emph{kernel polynomial} associated with the program. Next, we provide algorithms that, leveraging this information, compute under mild algebraic conditions the  dominant singularities and the {exact} radius of convergence of $\Delta(z)$, leading to an exact asymptotic expansion   and to exponential  bounds for its coefficients. 
Our approach is sound for
GCP programs and complete for the single-state subclass.
\end{abstract}


\begin{keywords}
 Probabilistic programming\sep
 running time \sep
 tail bounds \sep
 generating functions \sep
 kernel method
\end{keywords}

\maketitle





\section{Introduction}\noindent
Much research effort has been devoted to methods for analyzing termination and running time of
probabilistic programs. For almost-sure termination, a wide range of techniques, based on e.g.\ ranking
supermartingales, have been proposed --- see e.g.~\cite{AgrawalChatterjeeNovotny17,Chatterjee21,FioritiHermanns15}
and references therein. For expected running time, proof systems and algorithms have been put forward for
specific classes of programs, often with the goal of obtaining upper bounds,
see e.g.~\cite{CarbonneauxNgoHoffmann18,Giesl19,KaminskiKatoenMathejaOlmedo18}. All these techniques
are necessarily incomplete, as the termination problem is undecidable in general.

Beyond almost-sure termination and bounds on expectation, one often needs more precise analytic
information about the probability distribution of running time, such as its asymptotic behaviour and  sharp (exponential) bounds on its tails.
Identifying an interesting class of programs for
which this information can be obtained algorithmically is the goal of the present paper.
We target
programs featuring both a \emph{counter} $d$ and a \emph{state} variable $q$ that can be updated
probabilistically. An example is shown in
Listing~\ref{lst:running_example} (our running example).
\begin{lstlisting}[caption={Running Example.},basicstyle=\ttfamily\small,label=lst:running_example,captionpos=b,float=h,
abovecaptionskip=-2pt,mathescape=true]
	while($d\geq0$){
	     if ($q=1$) then
		{$\frac{1}{2}$:$d:=d+1\,;\,q:=2$[]$\frac{1}{2}$:$d:=d-1\,;\,q:=1$}
	     else
		{$\frac{2}3$:$d:=d+1\,;\,q:=1$[]$\frac{1}6$:$d:=d-2\,;\,q:=2$[] $\frac{1}6$:$d:=-1$}}
\end{lstlisting}\label{ex:run}
In essence, we allow for different regimes of
probabilistic updates for $d$ and $q$, depending on the current value of $q$. This format generalizes
the \emph{Constant Probability (CP)} programs of \cite{Giesl19}, so we refer to it as the
\emph{Generalized Constant Probability (GCP)} format.

Our analysis of GCP programs follows a distinctively algebraic approach. We define the semantics of a
program directly via the weighted runs of its associated PDA, obtaining a (sub-)probability
generating function (pgf) $f(z)=\sum_{n\ge0} p_n z^n$ whose coefficients $p_n$ give the probability
that the program executes exactly $n$ loop iterations. We then leverage tools from analytic combinatorics
\cite{Flajo} to analyze such pgf's. We decompose and analyze them in terms of runs of a probabilistic
PDA, in turn equivalent to a type of random walks in $\Z^2$ called \emph{meanders}
\cite{Flajo,BanderierFlajolet02}. This way, we prove that $f(z)$ is an algebraic function,
characterized via the roots of a \emph{kernel} polynomial $K(z,u)$. Analyzing $K(z,u)$, we can  under mild algebraic conditions  compute the radius of convergence $R$ of $f(z)$ and its singularities of minimal modulus (dominant):
this information directly translates into asymptotic formulae for $p_n$, as well as exact (non asymptotic)
exponential bounds \cite{Flajo}. Probability of termination, expectation and higher moments
can also be easily read off $f(z)$
as a byproduct.
The resulting method is sound, but not complete, for the class of GCP
programs; the conditions under which it succeeds are of linear-algebraic nature and easy to check. In
particular, completeness holds for the class of 1-state GCP, i.e.\ the CP format of \cite{Giesl19}.

In summary: our method provides precise information about the distribution of running time (exact asymptotics of coefficients,
exponential tail bounds) within a class of GCP programs admitting algebraic pgf's.
Despite the restrictions imposed by the GCP format, we give a complete symbolic analysis of challenging
programs, including the 
\textsf{ZeroConf} protocol.
We position our contribution as a bridge between program semantics and
analytic combinatorics, complementing existing proof techniques with a novel quantitative toolkit.

\paragraph*{Related Work}
Giesl, Giesl and Hark~\cite{Giesl19} introduced CP programs and gave algorithms to compute
asymptotically tight bounds on expected runtime. We generalize their scope in two directions: (i) we
target a broader   class (GCP) of programs, whose semantics is given by algebraic pgf's; and (ii) we leverage
generating function techniques, permitting extraction of \emph{exact} asymptotic information on coefficients
and  tight exponential bounds.

Kaminski et al.~\cite{KaminskiKatoenMathejaOlmedo18} developed a general weakest-precondition (wp)
style calculus for reasoning about expected run-times of probabilistic programs. Ngo, Carbonneaux and
Hoffmann~\cite{CarbonneauxNgoHoffmann18} introduced \emph{Bounded Expectations}, a static analysis that
synthesizes symbolic polynomial bounds on expected resource consumption. Ranking supermartingales
and lexicographic extensions~\cite{AgrawalChatterjeeNovotny17,Chatterjee21,FioritiHermanns15} offer
general techniques for termination and quantitative bounds, but their output is usually limited to sound
bounds on expected value. Our approach is complementary, yielding
much more precise quantitative information whenever applicable.


More closely related to ours is the work  of Br\'{a}zdil et al.~\cite{BrazdilKieferKuceraVarekova15},
where they analyze pPDA runtime and provide exponential tail bounds via martingale techniques.
Our techniques yield much stronger bounds for programs where both formalisms are applicable, besides providing exact
asymptotics; we compare the two approaches in
Section \ref{sec:en}.

Esparza et al.~\cite{esparza} study expected values and higher moments of running time via linear
systems involving expected values as unknowns.
Our approach is also somewhat related to work on termination and expected runtime in
probabilistic Vector Addition Systems with States~\cite{braz} and Probabilistic Counter Programs~\cite{AST};
and to a line of research by Kov\'{a}cs et al.~\cite{Kov1,Kov2,Kov3}
on the automated analysis of probabilistic loops, including higher moments. Again, the main difference of all these approaches  from our work lies in our focus on precise information on distribution (asymptotics, tail bounds),
rather than deciding almost-sure termination decision or moments.
%

\paragraph*{Structure of the paper} Section~\ref{sec:prelim} introduces preliminary notions on generating
and algebraic functions.   GCP programs
are introduced in Section~\ref{sec:GCPP}, together with their operational semantics via
a PDA-runs generating function $\Delta(z)$.
Our analysis of $\Delta(z)$  is split into two phases.  
In the first phase, presented in Section~\ref{sec:alg}, we adapt the \emph{kernel method} \cite{Flajo}
to compute an algebraic representation of $\Delta(z)$ (Algorithm \ref{alg:ana}).
This information is used as input in the second phase, presented in Section~\ref{sec:algan},
which  performs  a complete singularity analysis (Algorithm \ref{alg:radius}) leading to    asymptotics and exponential   bounds of $\Delta(z)$'s
coefficients .
Section~\ref{sec:en} introduces enhancements, discusses tail bounds, and further examples.
Concluding
remarks are drawn in Section~\ref{sec:concl}.
Detailed proofs and additional computational details \iffull have been confined to Appendices
\ref{app:proofs} and \ref{app:details}, respectively. \else   appear in a full version of the
paper available online \cite{Full}.\fi

\section{Preliminaries}\label{sec:prelim}

\paragraph*{Sub-probability generating functions}
We consider \emph{sub-probability generating functions (gf)} in the variable $z$. These
are functions of the complex variable $z$ of the form $f(z) = \sum_{n=0}^\infty a_n z^n$ such that
$a_n\in [0,1]$ and $f(1)= \sum_{n=0}^\infty a_n \leq 1$. If equality holds, we say $f$ is a
probability generating function (pgf), i.e.\ a distribution on $\N$. For $n\geq 0$, we let
$[z^n]f(z):=a_n$. Note that $f(1)=\sum_{n\geq 0} a_n$ is the total probability mass; $f'(1)=
\frac{\mathrm{d}}{\mathrm{d}z} f(z)_{|z=1}=\sum_{n\geq 1} n\cdot a_n$ is the expected value. Higher-order moments
can be computed similarly; see~\cite{Wilf05}.
\ifmai
Fix an integer $m\geq 1$ (the number of program variables). We let $\Hcal_m := \{ G : \Z^m \to \sPGF
\}$ be the set of \emph{functionals} from $\Z^m$ to $\sPGF$. We shall omit the index $m$ and write
$\Hcal$ whenever it is understood from context. We let $x=(x_1,\ldots,x_m)$ range over $\Z^m$.
We denote by $\Zero$ (resp.\ $\One$) the functional assigning the constant gf $f(z)=0$ (resp.\ $f(z)=1$) to any
$x\in \Z^m$. Convex combinations and linear combinations of functionals over predicates $\phi:\Z^m\to\{0,1\}$
are defined pointwise in the obvious way, and remain in $\Hcal_m$.
\fi
Exponential   bounds for the coefficients are easy to obtain from knowledge of the radius of convergence $R>1$ of $f(z)$: for every $n\geq 0$ and   $r\in(1,R)$,
$a_n\leq f(r)\cdot(1/r)^n$ \cite[Prop.IV.1]{Flajo}; this bound can be easily converted
into a deviation-from-mean bound.

\paragraph*{Algebraic functions}
Let $K(z,u)$ be a nonzero polynomial in $z$ and $u$ with complex coefficients. Let $\delta_u:=\deg_u(K(z,u))$ be the degree  in $u$ of $K$.  For each $z_0\in \C$, there are $1\leq \kappa\leq \delta_u$ distinct functions $u(z)$, called \emph{branches}  of $K$ at $z_0$, such that $K(z,u(z))=0$  identically near $z_0$.  More precisely, there is a  set $S\subseteq \C$ of \emph{singularities}  such that for every $z_0\notin S$,  $K(z,u)$ defines    $\kappa=\delta_u$  distinct branches \emph{analytic} (\replace{i.e.} defined and admitting a Taylor series with positive radius of convergence)   in a neighborhood $B_{z_0}$ of $z_0$.
%
%
At a singularity $z_0\in S$, we   have  $\kappa\leq \delta_u$  and each  branch  $u(z)$ is analytic in a \emph{slit neighborhood} of $z_0$ of the form  $B_{z_0} \setminus \rho$, where $\rho=\{\, z_0 + \lambda \mathrm{e}^{i\theta_0} : \lambda \ge 0  \}$, for any fixed $\theta_0$, is a \emph{ray} emanating from $z_0$. We will often say \emph{near} $z=z_0$ to mean: for all $z$ in some slit neighborhood of $z_0$. At any $z_0$, any nonzero branch $u(z)$ admits  near $z=z_0$ a \emph{Puiseux series} expansion
of the form $\sum_{n\geq n_0} c_n (z-z_0)^{n/\kappa}$, for  some \emph{initial coefficient} $n_0\in \Z$, integer $\kappa>0$,  and $c_n\in \C$ with $c_{n_0}\neq 0$; \iffull\ (see Appendix \ref{app:sec:algan} for details). \else  see \cite[Th.VII.7]{Flajo}. \fi  There exist well-known algorithms to compute these series, starting from $K(z,u)$ and $z_0$, such as the \emph{Newton's polygon algorithm}; see \cite[Chapter~7]{Flajo}. Formal Puiseux series from $z=z_0$ form an algebraically closed field. Let $z_0\in S$ and \replace{let $u(z)$ be} a branch at $z_0$ of $K(z,u)$: then we also say $z_0$ is a singularity \emph{of} $u(z)$. If $\mu=\lim_{z\to z_0} u(z)$ exists finite, $z_0$ is  a \emph{removable} singularity:  $u(z)$ can be made analytic at $z_0$ by (re)defining $u(z_0):=\mu$. If $n_0<0$ in the Puiseux expansion, then $z_0$ is called a \emph{pole} of   $u(z)$, and in that case $|u(z)|$ is unbounded near $z_0$. If $z_0\in S$ is neither removable nor a pole of $u(z)$, then it is a \emph{branch point} of $u(z)$. Let $K(z,u)=p(z)u^{\delta_u}+O(u^{\delta_u})$, with $p(z)\neq 0$. It is easy to show that $S\subseteq \Xi[K]:=\{z:\,\exists u\in \C \text{ s.t. } K(z,u)=0\text{ and } \partial K(z,u)/\partial u=0\}\cup\{z:p(z)=0\}$,  the \emph{exceptional set} of $K$. This set is finite under the condition that $K(z,u)$ is \emph{square-free}\footnote{Equivalently, $\gcd(K,\partial_u K)=1$, with  $K$ and $\partial_u K$ regarded as polynomials in $u$ with coefficients in the field of fractions  in $z$. E.g. $K=(u-z)^2$ is \emph{not} square-free; indeed $\Xi[K]=\C$.} which is typically the case for our applications.
More generally, we say that a function is \emph{algebraic} at $z_0$ if it is  a branch at $z_0$ of some nonzero polynomial. We let $\A$ be the set of all algebraic functions defined near $z=0$; this set forms a field.

\section{Generalized Constant Probability Programs}\label{sec:GCPP}
\lstset{basicstyle=\ttfamily,mathescape}
Probabilistic programs~\cite{barthe2020foundations} extend ordinary programs by allowing probabilistic
choice.
$\pGC$~\cite{mciver2005abstraction}  focuses on discrete probability distributions and does not consider \textsf{observe()} statements, which can be encoded
using  \textsf{fail} variables, see \cite{Katoen2018Lec10}. We also omit treating nondeterminism at this stage.
In what follows, we introduce \emph{Generalized Constant Probability (GCP)} programs,
a subset of $\pGC$.

\paragraph{Syntax and operational semantics of GCP}
A GCP   program is built from integer variables
$d$ and $q$, encoding a counter and a state respectively, and consists of a single \textsf{while} loop
whose guard tests non-negativity of the counter $d$.
Inside the \textsf{while}  loop, the state variable dictates
which  state- and counter-updates can be probabilistically performed.
In what follows,
$p_1:P_1\mathtt{[\ ]}\cdots\mathtt{[\ ]}p_k:P_k$  will denote probabilistic choice between
programs $P_1,\dots,P_k$, whenever $(p_1,\dots,p_k)$ is a probability distribution. 



\begin{definition}[GCP programs]\label{def:GCPP}
	Let $N \geq 1$ (number of \emph{states}) and $\TT =[(b_1,\J_1),\ldots,(b_N,\J_N)]$  (\emph{transition list})
	be such that  for each $i\in\{1,\dots,N\}$: $b_i\in[0,1]$, $\J_i\subseteq (0,1]\times
	\Z\times\{1, \dots,N\}$ is a finite set and $b_i+\sum_{(p,k,q')\in\J_i}p=1$.
	%
	The \emph{GCP program associated with} $\TT$ is:
	{\small{\em
			\begin{center}
				\begin{minipage}{\linewidth}
					\begin{lstlisting}
while ($d\geq 0$){
      if $q=1$   {$b_1$: $\;d:=-1$  $\mathtt{\left[\ \right]}$  $\;\mathtt{\left[\ \right]}_{(p,k,q')\in\J_1}$ $p:d:=d+k;\,q:=q'$}
      else if $q = 2$ {$b_2$: $\;d:=-1$  $\mathtt{\left[\ \right]}$  $\;\mathtt{\left[\ \right]}_{(p,k,q')\in\J_2}$ $p:d:=d+k;\,q:=q'$}
      $\scriptsize\vdots$
      else         {$b_N$: $d:=-1$   $\mathtt{\left[\ \right]}$  $\,\mathtt{\left[\ \right]}_{(p,k,q')\in\J_N}$ $p:d:=d+k;\,q:=q'$}  }.
					\end{lstlisting}
				\end{minipage}
	\end{center}}}\noindent
\end{definition}
It is natural to interpret   $\TT=[(b_1,\J_1),...,(b_N,\J_N)]$ as
the transitions of a  probabilistic \emph{Push-Down Automaton} (PDA),
with  a one-symbol stack alphabet say $\{u\}$.  The number of elements  on the stack (\emph{height}) represents the value of $d$ at a given moment.
For $j\in\Z$, assignments $d:=d+j$ are \emph{relative
	updates}, and $d:=j$ are \emph{absolute updates}.
Relative updates correspond to pushing/popping symbols onto/from the stack.
Additional immediate termination
arcs  represent absolute updates; see Fig. \ref{fig:run1} (right).
We now develop this correspondence rigorously, by giving an operational semantics of GCP programs in terms of PDAs.

Fix any integer\footnote{This $d_0$ has the sole  purpose of representing  termination by   absolute updates, keeping it distinct from termination by negative relative updates.}
$d_0<0$,  less than every integer occurring in $\TT_P$.
Let us denote a    pair $(k,q)\in \Z\times \{1,...,N\}$ just as $kq$. A \emph{run}  of $P$ of \emph{length} $n\geq 0$ is a sequence of the form
\begin{align}\label{eq:run}
\rho & =k_0q_{0},\; p_1,\; k_1q_{1},\,p_2,\ldots,  \; p_{n},\; k_nq_{n}
\end{align}
such that for each $t=0,...,n-1$:    either  (a) $k_t\geq 0$, $(p_{t+1},j_{t+1},q_{{t+1}})\in \J_{q_{t}}$ and $k_{t+1}=k_{t}+j_{t+1}$; or (b) $t=n-1$, $q_{{n-1}}=q_{n}$, $p_n=b_{q_n}$ and $k_n=d_0$. Case (b)   corresponds to termination by an absolute  update. Note however that it may well be $k_n<0$ even if the last transition is a relative negative update. Also note that, for $n=0$,  $\rho=k_0q_{0}$ with $k_0<0$ is allowed. 

The \emph{length} of $\rho$ is  $|\rho|:=n$; for $n=0$, we have $\rho=k_0q_{0}$.
The \emph{initial} and \emph{final  state} of $\rho$ are $\is(\rho):=q_{{0}}$ and $\fs(\rho):=q_{{n}}$, respectively. Its initial and final \emph{altitude} are $\ia(\rho):=k_0$ and $\fa(\rho):=k_n$, respectively.  
The \emph{weight} of $\rho$ is: $\wt(\rho):=1$ if $n=0$, and $\wt(\rho):=p_1\cdots p_{n}$ if $n>0$.
\emph{Terminal} runs, ranged over by $\tau$, \replace{ are those where} $k_n<0$. \emph{Non-terminal} runs are also called \emph{meanders} and are ranged over by $\sigma$.
%
For $d\in \Z$, $n\geq 0$ and $i=1,...,N$, let us define the following
quantities, giving the global weight of terminal runs of length $n$  starting from state $i$   and altitude  $d$:
\begin{align}\label{eq:terminal}
D^d_{i,n}&:=\sum\{\wt (\tau)\,:\,|\tau|=n,\,\ia(\tau)=d,\,\is(\tau)=i  \}\,.
\end{align}
(note that whenever $d<0$, we have $D^d_{i,n}=[n=0]$, while for $d\geq 0$ we have $D^d_{i,0}=0$).

\begin{definition}[operational gf]\label{def:Delta}
The \emph{operational} gf of initial state $i\in 1..N$ and counter value $d\in \Z$ is
$\Delta^d_i(z) := \sum_{n\geq 0} D^d_{i,n}z^n$.
\end{definition}

Thus $[z^n]\Delta^d_i(z)$ is the probability that $P$, starting from counter value $d$ and state $i$,
terminates in exactly $n$ iterations. $\Delta^d_i(1)$ is the total probability of termination, and
$\frac{\mathrm{d}}{\mathrm{d}z}\Delta^d_i(z)_{|z=1}/\Delta^d_i(1)$ is the expected number of steps to
termination given termination occurs.

\paragraph{A meander-based characterization}
We develop now a crucial  meander-based characterization of $\Delta^d_q(z)$.
To avoid proliferation of indices, we  study here only $\Delta^d:=\Delta^d_1$, that is the terminal runs starting from the (initial)
state $q=1$. For $d,n,h\geq 0$ and $j=1,...,N$, let us define the following   quantities,
relative to meanders \textbf{starting in state} $1$ with \textbf{initial altitude} $d$ and
{\textbf{ending in state} $j$} with \textbf{final altitude}  {$h\geq 0$}:
{
\begin{align*}
	m^d_{j,n,h}&:=\sum\{\wt(\sigma):|\sigma|=n,\, \ia(\sigma)=d, \, \is(\sigma)=1,\,\fa(\sigma)=h,\,\fs(\sigma)=j \}\\
	m^d_{j,n}&:=\sum_{h\geq 0} m^d_{j,n,h}.
\end{align*}
}
\begin{definition}[meanders gf's]\label{def:gfmeanders} For each $1\leq j\leq N,\,d\in \Z$,
we define the following
gf of meanders starting in state $1$ at altitude $d$ and ending in state $j$.
Variable $z$ marks the length, while   $u$ marks the final altitude of the meanders.
\setlength{\arraycolsep}{0.3pt}
$$
\begin{array}{rcll}
	M^d_{j}(z,u)&:=\sum_{n,h\geq 0}  m^d_{j,n,h} \cdot z^n u^h  \hspace{1.5cm} \text{(general  bivariate gf)}\\
	M^d_{j,h}(z)&:=\sum_{n \geq 0}  m^d_{j,n,h} \cdot z^n \hspace{0.5cm} \text{(ending with final altitude $h\geq 0$)}\\
	M^d_{j}(z)&:=\sum_{n \geq 0} m^d_{j,n}\cdot z^n \hspace{0.2cm} \text{(ending at \emph{any} nonnegative altitude).}
\end{array}
$$
\end{definition}

Note that $M^d_{j}(z)=M^d_{j}(z,1)$. A terminal run $\tau$ can be decomposed as $\tau=\sigma,\eta$, where $\sigma$ is a meander and $\eta=(p_n,\ k_{i_n}q_{n})$ a single negative update. This  gives rise to a decomposition of $\Delta^d$ in terms of the meanders gf's. We illustrate the underlying idea with an example, before stating the general result.

\begin{example}{\label{ex:run3}
	Consider our running example.  Assume an initial altitude $d\geq 0$. With an eye to the automaton in Fig. \ref{fig:run1} (bottom), it is easy to see that  terminal runs can be formed starting from  meanders $\sigma$  that:  (a) arrive   at state 1 with altitude 0 and are followed by a   relative update of -1 with probability $1/2$ (accounted for by $\frac 1 2 z\cdot M^d_{{1,0}}(z) $); (b)    arrive  at state 2 with altitude 0 or 1 and are  followed in each case by a  relative update of $-2$  with probability $1/6$: (accounted for by $\frac 1 6 z\cdot M^d_{{2,0}}(z)+\frac 1 6 z\cdot M^d_{{2,1}}(z)$);  (c) arrive  at state 2 with \emph{any} altitude and are followed by a  negative \emph{absolute} update  with probability $1/6$ (accounted for by $\frac 1 6 z\cdot M^d_{{2}}( z  ) $).
	{\begin{align}\label{eq:run3}
			\Delta^d(z)&=z \left(\;  \frac 1 2   M^d_{{1,0}}(z)   + \frac 1 6
			M^d_{{2,0}}( z  ) + \frac  1 6    M^d_{{2,1}}( z ) + \frac  1  6 M^d_{{2}}( z  )  \;\right).
		\end{align}
	}\noindent
}\end{example}

In general, consider the transition list $\TT_P$ and let $e_j$ be the absolute value of the minimal \emph{negative} relative update occurring in $\J_j$ (0 if none occurs),
and $p_j^{<h}$ the sum of the probabilities of all relative updates $<h$ in $\J_j$ (0 if none occurs). 

\begin{theorem}\label{th:char2}For $d\in \Z$:
{
	$\Delta^d(z) =[d<0]1+[d\geq 0] z\cdot\sum_{j=1}^N \left(\sum_{h=0}^{\replace{e_j-1}} {p_j^{<-h}} M^d_{j,h}(z)\;+  b_j M^d_j(z) \;\right).
	$}\noindent
\end{theorem}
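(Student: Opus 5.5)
The plan is a bijective decomposition of terminal runs, compared coefficientwise in $z$. The case $d<0$ is immediate from the definition of runs. If $k_0=d<0$, case (a) cannot fire, since it requires $k_0\ge 0$. Case (b) can fire only as the last step of a run of length $n\ge 1$, whose earlier steps would also require nonnegative altitudes. So the only terminal run is the length-$0$ run $dq_0$ with $q_0=1$, of weight $1$. Hence $D^d_{1,n}=[n=0]$ and $\Delta^d(z)=1$, matching the remark after (\ref{eq:terminal}).

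For $d\ge0$, fix $n\ge1$; note $D^d_{1,0}=0$. Let $\tau$ be a terminal run of length $n$ with $\ia(\tau)=d$ and $\is(\tau)=1$. Write $\tau=\sigma,p_n,k_nq_n$, where $\sigma$ is the prefix of length $n-1$. The first step is to show that $\sigma$ is a meander. Any transition at step $t$ requires $k_t\ge0$, so all altitudes $k_0,\dots,k_{n-1}$ are nonnegative; in particular $\fa(\sigma)=h:=k_{n-1}\ge0$. Let $j:=\fs(\sigma)=q_{n-1}$. Then $\wt(\tau)=\wt(\sigma)\cdot p_n$, and exactly one of two cases holds for the last step.
\begin{itemize}
\item[(b)] It is an absolute update. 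Then $p_n=b_j$, $q_n=j$, $k_n=d_0$, and $h\ge0$ is arbitrary.
\item[(a)] It is a relative update $(p_n,k,q_n)\in\J_j$ with $h+k<0$, i.e.\ $k<-h$. Such a $k$ exists only if $-h>-e_j$, i.e.\ $0\le h\le e_j-1$. Here the choice of $d_0$ below every integer in $\TT_P$ matters: it guarantees $k_n=h+k\ne d_0$, so the two cases never produce the same run.
\end{itemize}
Conversely, any meander $\sigma$ of length $n-1$ from state $1$ at altitude $d$, ending in $(h,j)$, extends to a terminal run in one of two ways: by an absolute update, or, when $h\le e_j-1$, by any element $(p,k,q')\in\J_j$ with $k<-h$. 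Distinct choices give distinct runs, so this is a bijection between terminal runs and (meander, final step) pairs.

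Summing weights over this bijection gives
\[
D^d_{1,n}=\sum_{j=1}^N\Big(\sum_{h=0}^{e_j-1}\ \sum_{\substack{(p,k,q')\in\J_j\\ k<-h}} p\; m^d_{j,n-1,h}\;+\;b_j\sum_{h\ge0} m^d_{j,n-1,h}\Big).
\]
The inner sum of $p$ is by definition $p_j^{<-h}$, and $\sum_{h\ge0}m^d_{j,n-1,h}=m^d_{j,n-1}$. Multiplying by $z^n$ and summing over $n\ge1$ turns $m^d_{j,n-1,h}$ into $zM^d_{j,h}(z)$ and $m^d_{j,n-1}$ into $zM^d_j(z)$. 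This yields the stated formula, and the absence of a constant term is consistent with $D^d_{1,0}=0$. Exchanging the order of summation is legitimate because all terms are nonnegative and all series are coefficientwise finite.

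The main obstacle is bookkeeping rather than any deep step. One must check that the two kinds of final step are disjoint (this is the role of $d_0$), that the relative-update bound is exactly $h\le e_j-1$, and that a case (b) step may occur only as the last step, so it never appears inside $\sigma$.
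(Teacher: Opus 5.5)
Your proof is correct and follows the paper's argument: you split a terminal run of length $n$ into its length-$(n-1)$ prefix (a meander) and its last step, separate absolute terminating updates (weight $b_j\,m^d_{j,n-1}$) from relative ones with $k<-h$, $h\le e_j-1$ (weight $p_j^{<-h}m^d_{j,n-1,h}$), and sum coefficientwise. Your explicit use of $d_0$ to keep the two kinds of final step distinct is a detail the paper leaves implicit. Note also that you and the paper both read case (b) of the run definition as requiring $k_{n-1}\ge 0$; the literal definition omits this condition, but the remark $D^d_{i,n}=[n=0]$ for $d<0$ presupposes it.
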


\ifmai
From here forth, we shall focus w.l.o.g. on the case $q=1$.
For $d,n,h\geq 0$ and $q=1,\ldots,N$, let us define the following quantities (respectively: (i) total probability of runs of length $n$ with final state $q$, initial and final altitudes $d$ and $h$; (ii) sum of (i) over all final altitudes $h$).
\begin{align*}
m^d_{q,n,h}&:=\sum\{\wt(\sigma):|\sigma|=n,\,\ia(\sigma)=d,\,\is(\sigma)=1,\,\fa(\sigma)=h,\,\fs(\sigma)=q\}\\
m^d_{q,n}&:=\sum_{h\geq 0}m^d_{q,n,h}.
\end{align*}
We can now define meander gf's which will then be used as building blocks for the decomposition of $\Delta^d:=\Delta^d_1$.

\begin{definition}[meander gf's]\label{def:gfmeanders}
For $1\leq q\leq N$, $d\in\Z$:
\setlength{\arraycolsep}{0.3pt}
$$\begin{array}{rcll}
	M^d_q(z,u)&:=&\sum_{n,h\geq 0}m^d_{q,n,h}\cdot z^n u^h \quad&\text{(bivariate gf)}\\
	M^d_{q,h}(z)&:=&\sum_{n\geq 0}m^d_{q,n,h}\cdot z^n \quad&\text{(fixed final altitude $h$)}\\
	M^d_q(z)&:=&\sum_{n\geq 0}m^d_{q,n}\cdot z^n \quad&\text{(any final altitude).}
\end{array}$$
\end{definition}

Note $M^d_q(z)=M^d_q(z,1)$. A terminal run $\tau$ decomposes as $\tau=\sigma,\eta$ where $\sigma$ is a
meander and $\eta$ is a single negative update. Let $e_q$ be the absolute value of the minimal negative
relative update in $\J_q$ (0 if none), and $p_q^{<l}$ the sum of probabilities of all relative updates
$<l$ in $\J_q$ (0 if none).

\begin{theorem}\label{th:char2}
For $d\in\Z$:
\begin{equation}
	\Delta^d(z)=[d<0]1+[d\geq 0]\,z\cdot\sum_{q=1}^N\left(\sum_{h=0}^{e_q-1}p_q^{<-h}M^d_{q,h}(z)+b_qM^d_q(z)\right).
\end{equation}
\end{theorem}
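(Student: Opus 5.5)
We prove the identity coefficientwise, i.e.\ we show that for every $n\geq 0$ both sides have the same coefficient of $z^n$, by a bijective decomposition of terminal runs. The case $d<0$ is immediate: by the definition of runs, a run starting at negative altitude cannot fire any transition of type (a), and type (b) requires a predecessor configuration, so the only terminal run is the length-$0$ run $dq_0$ with weight $1$; hence $D^d_{1,n}=[n=0]$ and $\Delta^d(z)=1$. For $d\geq 0$ we have $D^d_{1,0}=0$, so it remains to show, for $n\geq 1$,
\[
D^d_{1,n}=\sum_{j=1}^N\Big(\sum_{h=0}^{e_j-1}p_j^{<-h}\,m^d_{j,n-1,h}+b_j\,m^d_{j,n-1}\Big).
\]

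Let $\tau=k_0q_0,p_1,\ldots,p_n,k_nq_n$ be a terminal run of length $n\geq 1$ with $k_0=d\geq 0$ and $q_0=1$. By the definition of runs, every step $t<n$ under case (a) requires $k_t\geq 0$, and case (b) may only occur at the last step. Hence the prefix $\sigma=k_0q_0,\ldots,p_{n-1},k_{n-1}q_{n-1}$ has all altitudes nonnegative; in particular $\fa(\sigma)=k_{n-1}=:h\geq 0$, so $\sigma$ is a meander of length $n-1$ with $\is(\sigma)=1$, $\fs(\sigma)=q_{n-1}=:j$. Moreover $\wt(\tau)=\wt(\sigma)\cdot p_n$. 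The last step $\eta$ falls into one of two disjoint classes. In case (b), $p_n=b_j$ and the step is allowed from any altitude $h\geq 0$. In case (a), it is a relative update $(p_n,i,q')\in\J_j$ with $h+i<0$, i.e.\ $i<-h$; such an update exists only if $h\leq e_j-1$. Note that the target state $q'$ is free. Conversely, any meander $\sigma$ from $(d,1)$ to $(h,j)$ followed by either such step is a terminal run, and the map $\tau\mapsto(\sigma,\eta)$ is a bijection. Here we use the fact that $d_0$ is strictly below every update value, so case-(b) endings are distinguishable from relative ones, and that distinct transitions in $\J_j$ give distinct runs, since the run records $p$, $k$ and $q'$.

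Summing weights over this bijection, we group by $(j,h)$. The case-(b) endings contribute $\sum_{h\geq 0}m^d_{j,n-1,h}\,b_j=b_j m^d_{j,n-1}$. The case-(a) endings contribute $m^d_{j,n-1,h}\sum\{p:(p,i,q')\in\J_j,\ i<-h\}=m^d_{j,n-1,h}\,p_j^{<-h}$ for $0\leq h\leq e_j-1$, and zero for larger $h$. Multiplying by $z^n$, summing over $n\geq1$ and applying Definition~\ref{def:gfmeanders} gives the factor $z$ in front of the $M$'s. The absolute convergence needed to reorder these sums is automatic, since all weights are nonnegative and the totals are at most $1$.

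The main obstacle is bookkeeping rather than mathematics. One must check that the definitional side conditions of case (b) (namely $q_{n-1}=q_n$, and whether its use requires $k_{n-1}\geq0$) make the decomposition exactly bijective. One must also check that meanders, i.e.\ non-terminal runs, really coincide with prefixes all of whose altitudes are nonnegative. I would verify both against the run definition, reading "non-terminal" as $k_n\geq 0$, and confirm the result against Example~\ref{ex:run3}.
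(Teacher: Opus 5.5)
Your proposal is correct and is essentially the paper's own proof. Both argue coefficientwise, splitting each terminal run into a meander prefix plus a last step, which is either an absolute terminating update (contributing $b_j\,m^d_{j,n-1}$) or a relative update $k<-h$ (contributing $p_j^{<-h}\,m^d_{j,n-1,h}$, nonzero only for $h\leq e_j-1$), and then sum against $z^n$. Like the paper, you rely on reading case (b) of the run definition as also requiring $k_{n-1}\geq 0$ (the loop guard), which you rightly flag; note that it is this reading, not ``type (b) requires a predecessor configuration'', that excludes the length-$1$ run taking the absolute update of weight $b_1$ directly from altitude $d<0$.
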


\begin{example}{\em\label{ex:run3}
	For our running example ($d\geq 0$): terminal runs arise from meanders arriving at state 1 with altitude
	0 (followed by relative update $-1$ with prob.\ $1/2$), at state 2 with altitude 0 or 1 (followed by
	relative update $-2$ with prob.\ $1/6$), or at state 2 with any altitude (followed by absolute update
	with prob.\ $1/6$):
	{\begin{align}\label{eq:run3}
			\Delta^d(z)&=z\left(\tfrac 1 2 M^d_{1,0}(z)+\tfrac 1 6 M^d_{2,0}(z)+\tfrac 1 6 M^d_{2,1}(z)+\tfrac 1 6 M^d_2(z)\right).
	\end{align}}
}\end{example}
\fi

\usetikzlibrary{automata, positioning}
\begin{figure}[t]
\begin{subfigure}[c]{0.6\textwidth}
	\begin{lstlisting}[mathescape=true,basicstyle=\ttfamily\footnotesize]
while($d\geq0$){
     if ($q=1$)
        {$\frac12$:$d:=d+1\,;\,q:=2$ [] $\frac12$:$d:=d-1\,;\,q:=1$}
     else
        {$\frac23$:$d:=d+1\,;\,q:=1$ [] $\frac16$:$d:=d-2\,;\,q:=2$ [] $\frac16$:$d:=-1$}
}
	\end{lstlisting}
\end{subfigure}
\begin{subfigure}[c]{0.37\textwidth}
		\begin{tikzpicture}[->, >=Stealth, auto, node distance=4.5cm, semithick,every node/.style={scale=0.9, text=black}, thick, draw=black]
			\tikzstyle{every state}=[fill=white, draw=black, text=black, minimum size=1pt, draw=black]
		\node[state] (q1) {1};
		\node[state, right of=q1] (q2) {2};
		\path (q1) edge[bend left] node[above] {$\frac{1}{2}, u^1$} (q2);
		\path (q1) edge[loop above] node {$\frac{1}{2}, u^{-1}$} (q1);
		\path (q2) edge[bend left] node[below] {$\frac{2}{3}, u^1$} (q1);
		\path (q2) edge[loop above] node {$\frac{1}{6}, u^{-2}$} (q2);
		\draw[dashed] (q2) -- ++(1,0) node[midway, above] {$\frac{1}{6}$};
	\end{tikzpicture}
\end{subfigure}
\caption{\small\textbf{Left}: Running example. \textbf{Right}: its associated PDA. A
	solid transition labeled $p,u^j$ occurs with probability $p$ and pushes ($j\ge 0$) or pops ($j<0$) $|j|$
	elements into/from the stack. A dashed transition, or attempting to pop more than the stack contains,
	leads to termination.}\label{fig:run1}
\end{figure}


\ifmai
For the reader's convenience, we now briefly overview the analysis method developed in the following two
technical sections. The first crucial step is showing that $\Delta^d(z)$ is algebraic
(Section~\ref{sec:alg}): we prove $\Delta^d(z)=Q(z,u_1(z),\ldots,u_c(z))$, where $Q(z,u_1,\ldots,u_c)$
is a \emph{rational} expression and $\{u_1(z),\ldots,u_c(z)\}$ are certain branches at $z=0$ of a
\emph{kernel polynomial} $K(z,u)$. Both $Q$ and $K$ can be effectively computed symbolically via the
\emph{kernel method} from analytic combinatorics.
With $Q$ and $K$ at hand, Section \ref{sec:algan} shows how to
algebraically extract quantities of interest: the termination
probability $\Delta^d(1)$, the expected runtime $\Delta'^d(1)$
(and higher moments), and a lower bound on the radius of convergence~$R$.
\fi

\section{The Kernel Method for GCP Programs}\label{sec:alg}
The main outcome of this section is Proposition~\ref{prop:rational}, basically saying how to   compute  an algebraic representation of $\Delta^d(z)$.  In view of Theorem \ref{th:char2}, this is achieved by introducing  a linear system  satisfied by \emph{meanders} gf's, and solving it under a linear-algebraic rank condition. This is the essence of the \emph{kernel method},  which we adapt from Banderier and Flajolet\replace{'s} work, see \cite{Flajo,BanderierFlajolet02,AsinowskiBacherBanderierGittenberger20}. The resulting representation will then be used in the subsequent analysis of $\Delta^d(z)$ in Section \ref{sec:algan}.

The key building blocks are the (Laurent) polynomials
$S_{ij}(u):=\sum_{(p,k,i)\in\J_j}pu^k$, that account for all transitions from $j$ to $i$, tracking probability $p$ and update $k$.
%
If our program   had only \emph{nonnegative} relative updates, the  meanders bivariate  gf's would obey the equalities: $M^d_i(z,u)=[i=1] u^d+z\sum_{j=1}^N S_{ij}(u)M^d_j(z,u)$, for $i=1,...,N$ (NB: the integer `$d$' in $u^d$ is an actual exponent, not a superscript). In the presence of \emph{negative} relative updates, this equality breaks down:  e.g. $(p,-2,i)\in \J_j$ introduces in the RHS   a   term  $z\cdot pu^{-2}\cdot M^d_{j}(z,u)$ that   contains spurious summands. Indeed,  $M^d_{j}(z,u)=M^d_{j,0}(z)+u\cdot M^d_{j,1}(z)+u^2\cdot M^d_{j,2}(z)+\cdots$, so  $z\cdot pu^{-2}\cdot M^d_{j}(z,u)=zpu^{-2}M^d_{j,0}(z)+zpu^{-1}\cdot M^d_{j,1}(z)+zpu^0\cdot M^d_{j,2}(z)+\cdots$ contains negative powers of $u$, not present in the LHS. On a PDA (Fig. \ref{fig:run1}, bottom), those terms correspond to   transitions not allowed for meanders:  popping  from the stack more symbols than available, thus ending up in negative heights $-2$ and $-1$, respectively. To keep  the balance with   $M^d_i(z,u)$, these terms with negative powers of $u$, that is $zpu^{-2}M^d_{j,0}(z)+zpu^{-1}\cdot M^d_{j,1}(z)$,   must be subtracted  from the RHS.  Systematically,  this   corresponds to subtracting the balancing expression  $B_i$:
{
\begin{align}
	B_{i}(z,u)&:=z\sum_{j=1}^N \sum_{\ (p,k,i)\in \J_j:k<0} p\sum_{h=0}^{-k-1} u^{k+h}\cdot M^d_{j,h}(z)\label{eq:Bi}\,
\end{align}
}\noindent
($B_i$ depends on $d$, although this is not made explicit in the notation).

\begin{lemma}\label{prop:Meqs}
For $i=1,\ldots,N$: $M^d_i(z,u) \,=\,[i=1]u^d+z\sum_{j=1}^N S_{ij}(u)M^d_j(z,u)-B_i(z,u)$.
	\end{lemma}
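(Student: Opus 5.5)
We prove the identity by comparing coefficients of $z^n u^h$ on both sides, via a last-step decomposition of meanders. The argument treats all coefficients as formal power series in $z$ (and Laurent polynomials in $u$ coefficientwise in $z$), so no convergence issues arise.

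First we set up the combinatorial recurrence. Fix $d\ge 0$; for $d<0$ there are no meanders with $\ia=d$, so both sides vanish, since $M^d_j=0$ and hence $B_i=0$. We read the $[i=1]u^d$ term as applying when $d\ge 0$. For $n=0$ the only meander is the run $d1$, which contributes $[i=1]u^d$. For $n\ge1$, a meander $\sigma$ of length $n$ ending in state $i$ at altitude $h\ge0$ factors uniquely as a meander $\sigma'$ of length $n-1$, ending in some state $j$ at altitude $h'\ge0$, followed by a relative transition $(p,k,i)\in\J_j$ with $h=h'+k$. Case (b) transitions are excluded because they give $k_n=d_0<0$. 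Conversely, any such extension with $h'+k\ge0$ is a meander. Weights multiply, so
\[
m^d_{i,n,h}=\sum_{j=1}^N\sum_{(p,k,i)\in\J_j} p\, m^d_{j,n-1,h-k}\,[h-k\ge 0],\qquad h\ge 0 .
\]

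Next we translate this into generating functions. We expand $z\sum_j S_{ij}(u)M^d_j(z,u)=\sum_n\sum_j\sum_{(p,k,i)\in\J_j}\sum_{h'\ge0} p\,m^d_{j,n-1,h'} z^n u^{h'+k}$.
\begin{itemize}
\item The terms with $h'+k\ge0$ reproduce exactly $\sum_{n\ge1,h\ge0}m^d_{i,n,h}z^nu^h$, by the recurrence above.
\item The remaining terms have $h'+k<0$, which forces $k<0$ and $0\le h'\le -k-1$.
\end{itemize}
Setting $h'=h$, these remaining terms are $z\sum_j\sum_{(p,k,i)\in\J_j,k<0}p\sum_{h=0}^{-k-1}u^{k+h}M^d_{j,h}(z)$, which is precisely $B_i(z,u)$ as defined in \eqref{eq:Bi}.

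Hence $z\sum_j S_{ij}M^d_j - B_i = M^d_i(z,u)-[i=1]u^d$, which is the claim.

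The only delicate point is the bookkeeping of the spurious negative-altitude terms. We must check that the split $h'+k\ge 0$ versus $h'+k<0$ matches the index range $h=0,\dots,-k-1$ in $B_i$ exactly, with no double counting when several transitions share the same $k$. This holds because the sums run over the elements of each $\J_j$ separately. We should also confirm that all rearrangements are legitimate: for each fixed power of $z$, every sum involved is finite, so the manipulations are valid as formal series.
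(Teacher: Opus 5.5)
Your proof is correct and follows essentially the paper's route. Both derive the same last-step recurrence for $m^d_{i,n,h}$ and then sum against $z^nu^h$; the paper reindexes over $r\ge\max(0,-k)$ and splits by the sign of $k$, whereas you split the expanded product $z\sum_j S_{ij}M^d_j$ by the sign of $h'+k$, which is the same bookkeeping. One small point: your aside that both sides vanish for $d<0$ is slightly off, since for $i=1$ the right-hand side is $u^d\neq 0$. The lemma, like the paper's proof, is really meant for $d\ge 0$, and that is in effect how you end up reading it.
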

	In vector-matrix notation, letting $M^d:=(M^d_1,\ldots,M^d_N)^T$, $B:=(B_1,\ldots,B_N)^T$, $A(u)$
	the $N\times N$ matrix of elements $S_{ij}(u)$, and $v^d(u):=(u^d,0,\ldots,0)^T$:
	\begin{align}\label{eq:Meq0}
		M^d(z,u)&=v^d(u)+zA(u)\cdot M^d(z,u)-B(z,u)\,.
	\end{align}
	Equation \eqref{eq:Meq0} leads  to
	\begin{align}\label{eq:Meq}
		(I-zA(u))\cdot M^d(z,u)&=v^d(u)-B(z,u)\,.
	\end{align}
	Left-multiplying by $w\cdot\Adj(I-zA(u))$, where $w=(1,\ldots,1)$ and $\Adj(I-zA(u))$ is the \emph{adjoint} matrix, (i.e. the  $N\times N$  matrix  satisfying\footnote{Explicitly, for a matrix $E$, $\Adj(E)[k,\ell]=(-1)^{k+\ell}\det{(E_{\ell k})}$, where $E_{\ell k}$ is the $(N-1)\times(N-1)$ matrix obtained from $E$ by removing row $\ell$ and column $k$.} $\Adj(I-zA(u))\cdot (I-zA(u))=\det(I-zA(u))\cdot I$),  we obtain
	{\begin{align}\label{eq:MainEq}
			\det(I-zA(u))\,w\cdot M^d(z,u) &= w\cdot\Adj(I-zA(u))\cdot(v^d(u)-B(z,u))=:L(z,u,\widetilde{M}^d(z))
	\end{align}}
	where $L$ collects all $M^d_{j,h}(z)$'s into a tuple $\widetilde{M}^d(z)$.
	
	\begin{definition}\label{def:kernelpol}
		Let $e$ be the absolute value of the minimal negative degree of $u$ in $\det(I-zA(u))$ (0 if none). The
		\emph{kernel polynomial} is $K(z,u):=u^e\cdot\det(I-zA(u))$.
	\end{definition}
	
	$K(z,u)$ is a nonzero ordinary polynomial. Near $z=0$, there are up to $\delta_u=\deg_u(K)$ distinct
	branches $u(z)$ with $K(z,u(z))=0$. A branch $u_s(z)$ is \emph{small} if 
	$u_s(z)\to 0$ as $z\to 0$. Let $c$ denote the number of distinct small branches. Substituting\footnote{Substitution into $M^d(z,u)$ is well-defined because $M^d(z,u)$ is analytic in  $(0,0)$ and the $u_s(z)$ are small.}    each
	small branch into~\eqref{eq:MainEq},  the LHS vanishes since $\det(I-zA(u_s(z)))=0$, therefore we obtain the
	system:
	\begin{equation}\label{eq:system}
		L(z,u_s(z),\widetilde{M}^d_{j,h}(z))=0,\quad s=1,\ldots,c\,.
	\end{equation}\noindent
	The $  \widetilde{M}^d_{j,h}(z)$'s  occur  \emph{linearly} in \eqref{eq:system}, so this is actually an affine system in the unknowns $ {M}^d_{j,h}(z)$'s with coefficients in $\A$. If the rank of this system is $\geq|\widetilde{M}^d_{j,h}(z)|$, it has a unique solution.
	Once the $M^d_{j,h}(z)$'s are known, we solve~\eqref{eq:Meq} for $M^d(z,u)$, then set $M^d_j(z)=M^d_j(z,1)$. An explicit formula $Q$ for $\Delta^d(z)$ is then obtained via Theorem \ref{th:char2}. The following proposition additionally points out that  $Q$ is rational and can be computed \emph{symbolically}.
	
	\begin{proposition}\label{prop:rational}
		Assume the rank of~\eqref{eq:system} is $\geq|\widetilde{M}^d_{j,h}(z)|$. Then for each $d\geq 0$
		there is a rational expression $Q(z,u_1,\ldots,u_c)=\Num/D$ (with $\Num,D\neq 0$ polynomials) such
		that $\Delta^d(z)=Q(z,u_1(z),\ldots,u_c(z))$ near $z=0$. Moreover, $Q$ can be effectively computed
		given only $P$ and $d$, working in a fraction field with indeterminates $z,u_1,\ldots,u_c$.
	\end{proposition}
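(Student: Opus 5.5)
The plan is to follow the kernel-method pipeline described before the statement, and at each step check that only field operations over $\QQ(z,u_1,\ldots,u_c)$ are used, with $u_s$ as formal indeterminates; at the end, substitute $u_s=u_s(z)$.

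\emph{Step 1: the system \eqref{eq:system} is linear with coefficients in a polynomial ring.} The entries of $A(u)$ are Laurent polynomials $S_{ij}(u)$ with rational coefficients. Hence $\Adj(I-zA(u))$ and $\det(I-zA(u))$ lie in $\QQ[z,u,u^{-1}]$. By \eqref{eq:Bi}, each $B_i$ is a $\QQ[z,u,u^{-1}]$-linear combination of the unknowns $M^d_{j,h}(z)$, and $v^d(u)=(u^d,0,\ldots,0)^T$. So $L(z,u,\widetilde M^d)=\ell_0(z,u)+\sum_{(j,h)}\ell_{j,h}(z,u)\,M^d_{j,h}$, with all $\ell$'s explicitly computable Laurent polynomials.

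Replacing $u$ by the indeterminate $u_s$ gives, for $s=1,\ldots,c$, the linear equations $\sum_{(j,h)}\ell_{j,h}(z,u_s)X_{j,h}=-\ell_0(z,u_s)$ over $F:=\QQ(z,u_1,\ldots,u_c)$. This is the formal version of \eqref{eq:system}, and evaluating at $u_s=u_s(z)$ gives the actual system. The hypothesis says the actual system, with coefficients in $\A$, has rank at least the number $m$ of unknowns. Choose an $m\times m$ minor whose determinant $\Theta(z,u_1(z),\ldots,u_c(z))\neq 0$ in $\A$. Then the formal minor $\Theta(z,u_1,\ldots,u_c)$ is a nonzero element of $F$.

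By Cramer's rule over $F$, each $X_{j,h}=T_{j,h}/\Theta$ is a rational expression. Substituting the small branches (legitimate, since $\A$ is a field and the denominator does not vanish identically) gives $M^d_{j,h}(z)=X_{j,h}(z,u_1(z),\ldots,u_c(z))$ near $z=0$. This holds because the true meander gf's satisfy \eqref{eq:system} and the solution of that square subsystem is unique. The main care needed in this step is the \emph{uniqueness/identification} argument: substituting into a rational expression is a field homomorphism only away from the zeros of the denominator. I would handle it exactly as just described, by fixing the minor before substitution and noting that all other expressions are polynomial in the $u_s$.

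\emph{Step 2: recover $M^d_j(z)=M^d_j(z,1)$.} Evaluate \eqref{eq:Meq} at $u=1$: $(I-zA(1))M^d(z,1)=v^d(1)-B(z,1)$. The matrix $A(1)$ is substochastic, and $\det(I-zA(1))$ is a polynomial in $z$ equal to $1$ at $z=0$, so it is nonzero. Therefore $M^d(z,1)=(I-zA(1))^{-1}(v^d(1)-B(z,1))$. Here $B(z,1)$ is a $\QQ[z]$-linear combination of the $X_{j,h}$, so each $M^d_j(z)$ is again a rational expression in $z,u_1,\ldots,u_c$. (At $u=1$ no issue with $u^{-1}$ arises.)

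\emph{Step 3: assemble $\Delta^d$.} By Theorem~\ref{th:char2}, for $d\ge0$ we have $\Delta^d(z)=z\sum_j\bigl(\sum_{h<e_j}p_j^{<-h}M^d_{j,h}(z)+b_jM^d_j(z)\bigr)$. This is a $\QQ[z]$-linear combination of the quantities obtained above. Hence $\Delta^d=Q(z,u_1(z),\ldots,u_c(z))$, where $Q=\Num/D\in F$; writing it over a common denominator gives polynomials $\Num,D$, with $D\neq0$ a product of $\Theta$'s numerator and $\det(I-zA(1))$. If $\Num=0$, take the trivial representation $\Num/D$ with any nonzero $\Num$ only when $\Delta^d\ne0$. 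Otherwise note that $\Delta^d$ is not identically $0$ unless the program never terminates, a degenerate case that can be treated separately.

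Effectiveness: every operation used is effective. These are computing $\det$ and $\Adj$ of a matrix of Laurent polynomials, clearing $u^e$, forming the $\ell$'s from $\TT_P$ and $d$, Gaussian elimination over $F$, and a matrix inversion over $\QQ(z)$. Selecting a minor that is nonzero \emph{after} substitution requires deciding whether $\Theta(z,u_1(z),\ldots)\equiv0$. This can be done symbolically by reducing modulo $K(z,u_s)$ and testing with truncated Puiseux expansions. Alternatively, one can argue that the small branches are generic enough that nonvanishing of the formal minor suffices. I expect this point to be the main obstacle to make fully rigorous.
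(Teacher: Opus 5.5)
Your proposal is correct and follows essentially the paper's proof. You form the linear system over $\QQ(z,u_1,\ldots,u_c)$ with the $u_s$ as indeterminates, deduce nonvanishing of the formal determinant from the rank hypothesis, solve, substitute the small branches, recover $M^d_j(z)$ from \eqref{eq:Meq} at $u=1$, and assemble via Theorem~\ref{th:char2}. Your caveat that substitution is only a homomorphism on fractions whose denominators do not vanish at $u_s=u_s(z)$ is actually more accurate than the paper's ``unique field homomorphism $\phi$''.

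Drop the ``generic enough'' alternative for choosing the minor, though. A nonzero formal expression can vanish after substitution (e.g.\ $K(z,u_1)$), so a genuine zero test for algebraic functions is needed there. That is exactly what the paper defers to in its discussion of Algorithm~\ref{alg:ana}.
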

	
	
	\begin{example}\label{ex:run6}
			For the purpose of illustration, we first consider a 1-state program:\\
\begin{center}
			$P:$ \ \ \ \texttt{while ($d\geq 0$)\{$\frac 1 2$:$d:=d+1$[]$\frac 1 2$:$d:=d-1$\}}.
\end{center}
\noindent
Theorem \ref{th:char2} gives $\Delta^d(z)=(1/2) z M^d_{1, 0}(z)$. Here 
$$K(z,u)=u-(1/2) u^2 z-(1/2) z$$
 and  $e=c=1$: hence \eqref{eq:system} reduces to the single equation $u_1^d(z)-(1/2) z M^d_{1, 0}(z)/u_1(z) = 0$, which yields $ M^d_{1, 0}(z)=2 u^{d+1}_1(z)/z$ and finally $$\Delta^d(z)=Q(z,u_1(z))=u_1^{d+1}(z).$$ 
In this simple case, from $K(z,u)=0$  we can solve for $u_1(z)$ and find the explicit formula for the small branch
$$u_1(z)= \frac{1-\sqrt{-z^2+1}} z.$$
Then, we can expand directly $u^{d+1}_1(z)$ into   Puiseux series. For $d=0$ we find:			 
				$$\Delta^0(z)=u_1(z)=(1/2) z+ (1/8)z^3+ (1/16)z^5+O(z^7).$$
The same method can be applied to our running example, in this case with $c=3$ small roots, yielding $$K(z,u) = -4 u^5 z^2+12 u^3-6 u^2 z-2 u z+z^2 $$ 
(see Appendix \ref{app:details}.1 for details).    Expanding into Puiseux series, we find    $$\Delta^0(z)=(1/2)z+(1/6){z}^{2}+(1/36){z}^{4}+{  ({7}/{108})}{z}^{5}+(1/36)\,{z}^{6}
				+O ( {z}^{7} ).$$
	\end{example}

	\ifmai
	\section{Singularity analysis of $\Delta^d(z)$}\label{sec:algan}
	Proposition \ref{prop:rational} states  that $\Delta^d(z)=Q(z,u_1(z),...,u_c(z))$     near $z=0$. In this section, we present a symbolic algorithm to derive this representation in closed form. The resulting expression $\Delta^d$ can then be used for further analysis, such as evaluating $\Delta^d$ at specific points or determining its radius of convergence.

	

\fi
We summarize the workflow of  our algebraic analysis  as Algorithm \ref{alg:ana}.
The algorithm  fails at  step 3 if the rank of \eqref{eq:system}  is  $<|\widetilde{M}^d_{j,h}\replace{(z)}|$; in particular, this is the case if $c<|\widetilde{M}^d_{j,h}\replace{(z)}|$ (not enough small branches). Even if $c=|\widetilde{M}^d_{j,h}\replace{(z)}|$,   we might   still  have that the square matrix associated to  \eqref{eq:system} is singular. The determinant of this matrix is in turn an algebraic function, for which complete zero-tests exist \cite{hoeven2019computing}; we omit the details of this aspect, as computing a Puiseux series up to a sufficiently high order may be practically sufficient.
Algorithm \ref{alg:ana}
never fails for nontrivial 1-state programs. We say  $P$ is \emph{nontrivial} if it has at least one negative relative update. A trivial 1-state program is uninteresting, in that it either never terminates (in case of no absolute updates), or it gives rise to a geometric distribution of parameter $\lambda$ (absolute update of probability $\lambda$).

\begin{proposition}\label{prop:compl} If Algorithm~\ref{alg:ana} does not raise \textbf{Fail}  then it returns $K$ and $Q$ as specified by Def. 7 and Prop. \ref{prop:rational}, respectively.
Moreover, if $P$ is a nontrivial 1-state GCP program, Algorithm~\ref{alg:ana} does not raise
\textbf{Fail}.
\end{proposition}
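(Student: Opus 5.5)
The statement has two parts, and I would treat them separately. For the first (soundness), I would simply trace Algorithm~\ref{alg:ana} against the construction preceding Proposition~\ref{prop:rational}. The algorithm computes $K(z,u)=u^e\det(I-zA(u))$ exactly as in Def.~\ref{def:kernelpol}. It forms the system \eqref{eq:system} from \eqref{eq:MainEq} by substituting symbolic small branches $u_1,\ldots,u_c$. When it does not fail, the rank condition holds, so the unique solution for the $M^d_{j,h}$ is obtained by Cramer's rule in the fraction field $\QQ(z,u_1,\ldots,u_c)$. It then solves \eqref{eq:Meq} and sets $u=1$. Finally it plugs everything into Theorem~\ref{th:char2}. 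Each of these operations is a field operation, so the output is the rational $Q$ of Proposition~\ref{prop:rational}. I expect this part to be bookkeeping only.

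For the second part, let $N=1$ and write $S(u)=\sum_k p_k u^k$. Nontriviality means some $p_k$ with $k<0$ is nonzero; let $e=e_1\geq 1$ be the largest magnitude of a negative update, so $p_{-e}>0$. Then $K(z,u)=u^e-zP(u)$, where $P(u):=u^eS(u)$ is a polynomial with $P(0)=p_{-e}\neq 0$. The unknowns in \eqref{eq:system} are $M^d_{1,0},\ldots,M^d_{1,e-1}$, so there are exactly $e$ of them.

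\emph{Step 1: counting small branches.} I would use the Newton polygon near $z=0$. The relation $u^e=zP(u)$ with $P(0)\neq0$ gives exactly $e$ small branches $u_s(z)\sim\omega^s(p_{-e}z)^{1/e}$, $s=1,\ldots,e$, with $\omega$ a primitive $e$-th root of unity. The remaining branches are large, i.e.\ they do not tend to $0$. These $e$ small branches are pairwise distinct, since their leading terms differ. Hence $c=e$.

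\emph{Step 2: the system is nonsingular.} Since $N=1$, $\Adj$ is $1$, and after multiplying by $u^e$ the equations \eqref{eq:system} become
\[
u_s^{d+e}=z\sum_{h=0}^{e-1}R_h(u_s)\,M^d_{1,h}(z),\qquad R_h(u)=\sum_{k=-e}^{-h-1}p_k\,u^{k+h+e}.
\]
Each $R_h$ is a polynomial of degree at most $e-1$ whose lowest-order term is $p_{-e}u^h$, coming from $k=-e$. Consequently the coefficient matrix $[R_h(u_s)]_{s,h}$ factors as $V\cdot T$. Here $V=[u_s^m]$ is the $e\times e$ Vandermonde matrix in the $u_s$. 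The matrix $T$ expresses $R_h$ in the monomial basis $u^0,\ldots,u^{e-1}$; it is triangular with diagonal entries all equal to $p_{-e}$. Therefore
\[
\det=p_{-e}^{\,e}\prod_{s<t}(u_t-u_s),
\]
and this is nonzero as an element of $\A$ because the small branches are distinct (Step 1). The extra factor $z$ in front is invertible in the fraction field. Hence the rank equals $e$ and the algorithm does not fail.

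The main obstacle is Step 2, i.e.\ making sure the triangular factorization is exactly right: the indices in $B_1$ from \eqref{eq:Bi} must line up so that $T$ is triangular with constant diagonal $p_{-e}$. If that bookkeeping turns out messier than expected, I would fall back on a leading-order argument: as $z\to0$ the matrix is asymptotic to $p_{-e}$ times the Vandermonde matrix in the $u_s$, and the leading term of the determinant is nonzero.
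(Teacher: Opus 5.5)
Your proposal is correct and takes essentially the paper's approach. For soundness you, like the paper, read the result off Proposition~\ref{prop:rational} and Theorem~\ref{th:char2}; for 1-state programs you show there are $c=e$ distinct small branches and factor the coefficient matrix as a Vandermonde matrix times a triangular matrix with $p_{-e}$ on its (anti-)diagonal, as in Lemmas~\ref{lemma:det} and~\ref{lemma:onestatec}. The only differences are cosmetic: multiplying through by $u_s^e$ gives you an ordinary Vandermonde in the $u_s$ and a lower-triangular $T$, with no sign ambiguity, where the paper uses a Vandermonde in $u_s^{-1}$ and an anti-triangular Hankel matrix; and you derive $c=e$ directly from the Newton polygon instead of citing Asinowski et al.
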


{\small
\begin{algorithm}[t]
\caption{Computation of $K$ and $Q$}\label{alg:ana}
\begin{algorithmic}[1]
\Statex\textbf{Input:} $P$, a GCP  program; $d\geq 0$, an  (initial altitude)
\Statex\nonumber\textbf{Output:}   $K$, $Q=T/D$.
\State Compute the kernel polynomial $K(z,u)$ as in Definition \ref{def:kernelpol}
and build linear system \eqref{eq:system}.\label{line:l1}
\State	Using the Newton's polygon algorithm, compute  all the Puiseux series of  $K(z,u)$ at $z=0$, and identify  among them the $c$  {small branches} as those having   initial coeff. $n_0>0$. 
\State If \eqref{eq:system} has rank  $<|\widetilde{M}^d_{j,h}(z)|$  then {\textbf{Fail}}; otherwise assume w.l.o.g. $|\widetilde{M}^d_{j,h}(z)|=c$.\label{line:l2}
\State Solve symbolically: \eqref{eq:system} for  $\widetilde{M}^d_{j,h}(z)$, then   \eqref{eq:Meq} for $M^d(z,u)$, and let $M^d_j(z):=M^d_j(z,1)$.\label{line:l3}
\State Combine $\widetilde{M}^d_{j,h}(z),  \ M^d_j(z)$ to obtain a rational expression $Q(z,u_1,...,u_c)=\Num(z,u_1,...,u_c)/D(z,u_1,...,u_c)$ s.t.  $\Delta^d(z)=Q(z,u_1(z),\dots,u_c(z))$ (Theorem \ref{th:char2}, Proposition \ref{prop:rational}).\label{line:l4}				
\State \textbf{Return}  $K(z,u)$, $Q(z,u_1,...,u_c)$. \label{line:l5}
\end{algorithmic}
\end{algorithm}}

\ifmai
\begin{remark}[exactness]{\em
Algorithm \ref{alg:ana} proceeds entirely by symbolic computation. In particular, the small branches $u_s(z)$
are handled as formal Puiseux series, and the rational expression $Q$ is derived symbolically, Assuming no \textbf{Fail} arises, the output $\Delta^d$ is represented
\emph{exactly} as a rational expression in $z$ and the $u_s(z)$.}
\end{remark}

\setlength{\arraycolsep}{0.5pt}
\begin{example}{\em\label{ex:run4}
We illustrate Algorithm \ref{alg:ana} on our running example.
First, we derive system \eqref{eq:Meq0}:
{\scriptsize
$$\left\{
\begin{array}{lcll}
	M^d_{{1}} \left( z,u \right)&=&{u}^{d}+z \left( (1/2)\,{\frac {M_{
				{1}}^d \left( z,u \right) }{u}}+(2/3)\,uM_{{2}}^d \left( z,u \right)
	\right) -z{(1/2)\frac {   M_{1,0}^d\left( z \right)}{
			u}}\\
	M^d_{{2}} \left( z,u \right)&=&z \left( (1/2)\,uM_{{1}}^d \left( z,u
	\right) +(1/6)\,{\frac {M_{{2}}^d \left( z,u \right) }{{u}^{2}}} \right)
	-  z\left( (1/6)\,{\frac {  M_{{2,0}}^d\left( z \right) }{{u
			}^{2}}}+(1/6)\,{\frac {  M_{{2,1}}^d \left( z \right) }{u}}
	\right)\,.
\end{array}\right.
$$
}\noindent
In the matrix form {\footnotesize $M^d(z,u)=v^d(u)+zA(u)M^d\replace{(z,u)}-B(z,u)$}, we have {\footnotesize $v^d(u)=[u^d, 0]^T$},
{\footnotesize $A(u)=\left[\begin{smallmatrix}1/(2u)\, &\,  (2/3)u\\ (1/2)u\, &\, 1/(6u^2)  \end{smallmatrix}\right]$},
and {\footnotesize$B(z,u)=z\cdot[(1/2)M^d_{1,0}(z)/u  ,\, (1/6)(M^d_{2,0}(z)/\replace{u^2}+ M^d_{2,1}(z)/\replace{u})]^T$}.
We then compute the kernel polynomial $K(z,u)$ (Line \ref{line:l1}), and determine its small branches (Line \ref{line:l2}). In detail,  $\det (I-zA(u))=  1-(1/2) z/u-(1/6) z/u^2+(1/12) z^2/u^3-(1/3) u^2 z^2$, hence $e=3$ and  $K(z,u)=u^3-(1/2) u^2 z-(1/6) z u+(1/12) z^2-(1/3) u^5 z^2$. The polynomial $K$ has  $c=3$ small branches, $u_1(z),u_2(z),u_3(z)$,  with series (Puiseux) expansions near $z=0$:
{\small
\[
\begin{aligned}
	u_1(z) &= 1/2  z- 1/16  z^6+O(z^7)\\
	u_2(z) &= 1/6   \sqrt 6 z^{\frac 1 2}+O(z^{\frac 7 2})\\
	u_3(z) &=- 1/6   \sqrt 6 z^{\frac 1 2}+O(z^{\frac 7 2}).
\end{aligned}
\]}\noindent
Substituting these branches into \eqref{eq:MainEq} yields a linear system. In this case, the system involves three unknowns $\widetilde{M}^d_{j,h}\replace{(z)}=(M^d_{1,0}\replace{(z)},M^d_{2,0}\replace{(z)},M^d_{2,1}\replace{(z)})$ and three independent equations:
{\scriptsize
	\begin{empheq}[left=\empheqlbrace]{alignat=3}
		\pushleft{L(z,u_s(z),\widetilde{M}^d_{j,h}\replace{(z)}) =}\label{eq:systemrun}\\ 
		\pushleft{\ \ \ \ \frac{-3 u_s(z)^3 z^2-6 u_s(z)^2 z+z^2}{12 u_s(z)^3}M_{{1,0}}^\replace{d}(z)\,+\,
			\frac{-4 u_s(z)^2 z^2-6 u_s(z) z+3 z^2}{36 u_s(z)^3}M_{{2,0}}^\replace{d}(z)\,+}\nonumber\\ 
		\pushleft{\ \ \ \ \frac{-4 u_s(z)^2 z^2-6 u_s(z) z+3 z^2}{36 u_s(z)^2} M_{{2,1}}^\replace{d}(z)\,+\,
			\left(1-\frac{z}{6 u_s(z)^2}+\frac 1 2 u_s(z) z\right) u_s(z)^d\,=\,0}\nonumber\\
		\pushleft{\text{($s=1,2,3$)}}\nonumber
\end{empheq}}\noindent
\ifmai
\left\{\footnotesize
\begin{array}{l}
L(z,u_s(z),\widetilde{M}^d_{j,h})= \\[4pt]
- \left( u_s(z)  \right) ^{d}\,+ \,\frac {z }{2\, u_s(z) }
M_{{1,0}}(z)\,+\,
\,{\frac {2\, {z}^{2}u_s(z)}{18\, \left( u_s(z)  \right)^{2}
	+z}} M_{{2,0}}(z)\,+ \,\frac {2\, \left( u_s(z)  \right) ^{2}{z}^{2
}   }{18\, \left( u_s(z)  \right) ^{2}+z} M_{{2,1}}(z)=0\\[4pt]
\text{($s=1,2,3$)}
\end{array}
\right.
\fi
Solving this system yields explicit rational expressions in $z,u_s(z)$ for the $M^d_{j,h}(z)$'s (Line \ref{line:l3}). This in turn
allow us to compute a rational expression $Q=\Num/D$ for $\Delta^d(z)$ from   formula \eqref{eq:run3} in Example \ref{ex:run3} (Line \ref{line:l4}). The resulting expression is too large to be reported here. Expanding it into a Puiseux series, we obtain
{\small
$\Delta^0(z)=(1/2)z+(1/6){z}^{2}+(1/36){z}^{4}+{  ({7}/{108})}{z}^{5}+(1/36)\,{z}^{6}
+O \left( {z}^{7} \right).$}

}\end{example}
\fi

\section{Singularity analysis of $\Delta^d(z)$}\label{sec:algan}
\newcommand{\es}{\emptyset}
We   introduce an algebraic analysis method of $\Delta^d(z)$, which takes as input the rational expression $Q(z,u_1,...,u_c)$  and the kernel polynomial $K(z,u)$ returned by Algorithm \ref{alg:ana}. Under mild conditions, the method   returns an explicit  formula in $n$ of  the asymptotic growth of the coefficient $c_n$ of  $\Delta^d(z)$, as well as exact exponential upper bounds.  Computation of $R$ is essential for this analysis. From now on, unless otherwise stated, by \emph{singularity} of an algebraic function we mean a \emph{non-removable} one. We start by recalling some well-known facts on generating functions and the role of their singularities, see \cite[Prop.IV.1,Th.IV.7]{Flajo}.

\begin{lemma}\label{lemma:R} Let $f(z)$ be an algebraic function analytic at $z=0$, and $R$ the radius of convergence of its Taylor series, say $f(z)=\sum_{n\geq 0}c_n z^n$. Then:
\begin{enumerate}
\item $R=\min\{|\zeta|\,:\,\zeta\text{ is a singularity of }f(z)\}$, with $\min\es:=+\infty$.
\item Assume the  $c_n$     are all nonnegative. Then: (a)  for each  $r\in (0,R)$ and $n\geq 0$: $c_n\leq f(r)/r^n$;  (b) If $R<+\infty$ then $R$ is a singularity (Pringsheim theorem).
\end{enumerate}
\end{lemma}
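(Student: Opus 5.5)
The plan is to derive all three claims from standard complex analysis, applied to the algebraic function $f(z)$ and the finiteness of its singular set. Throughout, \emph{singularity} means non-removable, as agreed at the start of this section.

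\textbf{Item 1.} Since $f$ is algebraic, it is a branch of some nonzero polynomial $P(z,u)$. We may replace $P$ by its square-free part, so the exceptional set $\Xi[P]$ is finite. The function $f$ can be analytically continued along every path avoiding the finitely many points of $\Xi[P]$: at each point outside $\Xi[P]$ the continuation is one of the analytic branches of $P$.

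For the inequality $R\geq \min|\zeta|$, let $\rho:=\min\{|\zeta| : \zeta$ a non-removable singularity of $f\}$. In the disc $|z|<\rho$, every point of $\Xi[P]$ that the continuation of $f$ reaches is removable, since $f$ stays bounded near it. Removable singularities can be filled in, so $f$ extends to a single-valued analytic function on the whole open disc. The disc is simply connected, so the monodromy theorem rules out ambiguity in this extension. Cauchy's theory then gives $R\geq\rho$.

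For the reverse inequality, take $\zeta$ with $|\zeta|=\rho$. Near $\zeta$, the function $f$ is either unbounded (a pole) or genuinely multivalued (a branch point, whose Puiseux series contains non-integer powers). In either case $f$ cannot extend analytically to a neighbourhood of $\zeta$, and therefore the Taylor series of $f$ cannot converge on any disc of radius larger than $\rho$. If $f$ has no singularities at all, then it is entire. An algebraic entire function has polynomial growth, because its Puiseux expansion at $\infty$ has finitely many negative-order terms; hence it is a polynomial and $R=+\infty$. This is consistent with the convention $\min\emptyset=+\infty$.

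\textbf{Item 2(a).} This is a Cauchy-type bound. For $0<r<R$ and $c_n\geq0$, the series $f(r)=\sum_m c_m r^m$ converges and all its terms are nonnegative. Dropping every term except the $n$-th gives
$$c_n r^n\leq f(r),$$
which is the claim.

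\textbf{Item 2(b).} This is Pringsheim's theorem, and it is the step I expect to be the main obstacle. Suppose, for contradiction, that $f$ is analytic at $z=R$. Then $f$ is analytic on a disc centred at $R/2$ of radius greater than $R/2$; call this radius $R/2+\epsilon$. The Taylor coefficients of $f$ at $R/2$ are
$$\frac{f^{(k)}(R/2)}{k!}=\sum_{n}c_n\binom{n}{k}(R/2)^{n-k},$$
and they are nonnegative. Evaluate the expansion at $x=R+\epsilon/2$ and rewrite everything as a double sum with nonnegative terms. Tonelli's theorem lets us swap the order of summation, which shows that $\sum_n c_n x^n$ converges with $x>R$. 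This contradicts the definition of $R$, so $f$ is not analytic at $R$.

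It remains to check that $R$ is a \emph{non-removable} singularity. If it were removable, $f$ would extend analytically to a neighbourhood of $R$, which is exactly what was just excluded.
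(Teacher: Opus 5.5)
Your proof is correct in substance, but it is not the paper's route: the paper gives no proof of this lemma and simply recalls it from Flajolet--Sedgewick. The standard arguments behind that citation hold for any function analytic at $0$, with a singularity defined as a boundary point of the disc of analyticity across which $f$ cannot be continued. There, item 1 follows from Cauchy's formula (analyticity in $|z|<r$ forces $R\ge r$) plus a compactness argument on $|z|=R$. Item 2(a) is the bound $c_n\le \max_{|z|=r}|f(z)|/r^n$ from Cauchy's coefficient formula, which equals $f(r)/r^n$ when the coefficients are nonnegative. Item 2(b) is proved by essentially your recentring argument. You differ in item 1, where you exploit algebraicity (finite exceptional set, the Puiseux pole/branch-point dichotomy, monodromy on the disc). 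This has the merit of matching the Puiseux-based notion of singularity used in Theorem~\ref{th:singAng}. You also differ in 2(a): discarding the nonnegative terms of $f(r)=\sum_m c_m r^m$ is more elementary than Cauchy's formula, though the latter also bounds $|c_n|$ without any positivity assumption.

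A few points in item 1 need tightening.
\begin{itemize}
\item The reason ``removable, since $f$ stays bounded near it'' is not valid: $\sqrt{z}$ is bounded near $0$ yet cannot be filled in. The right reason is your choice of $\rho$: the branch reached at such a point is neither a pole nor a branch point, so its Puiseux series has $\kappa=1$ and $n_0\ge 0$.
\item ``Singularity of $f$'' must refer to continuations of $f$ \emph{inside} the disc, otherwise item 1 fails. For example, $f(z)=(1-\sqrt{1-z})/z$ has $R=1$, but one turn around $z=1$ changes it into $(1+\sqrt{1-z})/z$, which has a pole at $0$.
\item With that reading, before invoking monodromy on $|z|<\rho$ you must check that \emph{every} path in this disc, not just paths inside $|z|<|\zeta|$, reaches an analytic branch at each exceptional point $\zeta$. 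A short induction over the finitely many moduli of points of $\Xi[P]$ does this.
\end{itemize}
Finally, the detour through entire algebraic functions is unnecessary: for $\rho=+\infty$ your first inequality already yields $R=+\infty$. In 2(b) you should say why $f$ is analytic on $|z-R/2|<R/2+\epsilon$. For small $\epsilon$ this disc lies in $\{|z|<R\}\cup\{|z-R|<\delta\}$, and the two pieces glue because their intersection is connected.
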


A singularity of minimal modulus of an analytic function at $z=0$ is called \emph{dominant}. According to the previous lemma, part 1,  computing $R$ is equivalent to locating a dominant singularity.  Moreover, for generating functions with nonnegative coefficients, like probability g.f.'s, part (b) ensures that   a dominant singularity on $[0,+\infty)$ must exist, if there are any. The crucial result for the computation of $R$  is   the next theorem. We need a few standard definitions  from complex analysis \cite{Flajo}.  For distinct $z_1,z_2\in \C$, a \emph{path} connecting $z_1$ to $z_2$ is a continuous function  $\gamma:[0,1)\rightarrow \C$  such that $\gamma(0)=z_1$ and $\lim_{t\to 1^-} \gamma(t)=z_2$.

\begin{definition}[branches connected along a path]\label{def:conn} Let $K(z,u)$ be a polynomial and $\zeta\in \C$.  Let  $u(z)$  be  a branch at $z=0$, and $\tilde u(z)$  be a branch at $z=\zeta$.  Let $z_0$ be a point near 0, where $u(z)$ is defined and analytic and  $\gamma:[0,1)\rightarrow \C\setminus \Xi[K]$  be a path connecting $z_0$ to $\zeta$. We say that $\tilde u (z)$ is the \emph{branch at $\zeta$ analytically connected to $u(z)$ along} $\gamma$ if $u(\gamma(t))=\tilde u(\gamma(t))$ for all  $t\in (1-\epsilon,1)$, for some $0<\epsilon<1$.
\end{definition}

The  theorem below relies on standard arguments from the theory of algebraic functions.

\begin{theorem}[locating singularities]\label{th:singAng} Let $f(z)$ be  an algebraic function, analytic at $z=0$ and $R>0$ be the radius of convergence  of its Taylor series. 
Let $Q(z,u_1,...,u_c)$ be a rational function, $K(z,u)$ be a nonzero polynomial and  $u_1(z),...,u_c(z)$ be branches of $K(z,u)$ at $z=0$. Assume $f(z)=Q(z,u_1(z),...,u_c(z))$  near $z=0$. Fix $\zeta\in \C$ with $|\zeta|\leq R$ and let $S_1(z),...,S_c(z)$ be the Puiseux series of the branches at $z=\zeta$ analytically connected to $u_1(z),...,u_c(z)$, respectively, along any   path contained in the disk $\{z:|z|<R\}$. Define $S_Q(z):=Q(z,S_1(z),...,S_c(z))$ as an element of the field of   Puiseux series in $(z-\zeta)$.  Then $S_Q(z)=f(z)$  as a function, for $z$ in a slit neighborhood of $\zeta$.  As a consequence:
\begin{enumerate}
\item $\zeta$ is a singularity of $f(z)$ if and only if $S_Q(z)=\sum_{n\geq n_0}c_n(z-\zeta)^{n/\kappa}$ has $n_0< 0$ or $\kappa>1$.
\item $L=\lim_{z\rightarrow \zeta} f(z)$   exists when taken along the chosen path. $L$ is finite if and only if $n_0\geq 0$. In particular, $L=c_{n_0}$ if $n_0=0$, $L=0$ if $n_0>0$.
\end{enumerate}
\end{theorem}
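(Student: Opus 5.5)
The plan is to prove the main identity $S_Q(z)=f(z)$ near $\zeta$ by analytic continuation, and then read off items 1 and 2 from the shape of the Puiseux series.

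\emph{Continuation of the branches.} Fix a path $\gamma$ in $\{|z|<R\}$ from a point $z_0$ near $0$ to $\zeta$; by definition of branches connected along a path, $\gamma$ avoids $\Xi[K]$ except possibly at its endpoint. For square-free $K$ this set is finite; otherwise I would first replace $K$ by its square-free part, which has the same branches. Since $\gamma([0,1))$ avoids $\Xi[K]$, each $u_s$ continues analytically along $\gamma$: cover the path by finitely many discs free of exceptional points, in each of which $K$ has $\delta_u$ distinct analytic branches, and follow the unique branch agreeing with the previous one on overlaps. Near $\zeta$, the finitely many branches at $\zeta$ are analytic on a slit neighbourhood whose slit can be chosen away from the final segment of $\gamma$. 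The continued $u_s$ then coincides with exactly one of them on $\gamma(t)$, $t\in(1-\epsilon,1)$, and by the identity theorem on the whole slit disc. This branch is $S_s$.

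\emph{Continuation of $f$ and of $Q$.} Since $f$ is analytic on $|z|<R$, it coincides along $\gamma$ with its own continuation. The function $g(z):=Q(z,u_1(z),\dots,u_c(z))$ equals $f$ near $0$. Continuing along $\gamma$, $g$ is a rational function of continued analytic functions, hence meromorphic along the path. By the identity theorem, $g=f$ wherever both are defined, so $g$ has no poles along $\gamma$; any apparent poles are removable. Near $\zeta$, $g=Q(z,S_1,\dots,S_c)$ on the slit disc. Evaluating the Puiseux series field operations as convergent series, which is possible because Puiseux series form a field of convergent expansions on a slit disc, gives $S_Q(z)=f(z)$ there.

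The step I expect to be the main obstacle is justifying that $D(z,S_1,\dots,S_c)\not\equiv 0$, so that $S_Q$ is a well-defined element of the field. I would argue this from continuation: the denominator is nonzero near $0$, and an analytic function along a connected chain of discs that vanishes identically on one disc vanishes on all of them. So it cannot be identically zero near $\zeta$ without being zero near $0$. I also need independence from the choice of path, since the statement says ``any path''. Any two paths inside the disc are homotopic there, and $f$ is single-valued on the disc. The $S_s$ themselves might depend on the path, but $S_Q=f$ holds for each choice, so the conclusion does not.

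\emph{Consequences.} For item 1, if $n_0\geq0$ and $\kappa=1$, then $S_Q$ is a convergent power series in $z-\zeta$, so $f$ extends analytically and $\zeta$ is removable. Conversely, if $n_0<0$, then $|f|$ is unbounded near $\zeta$. If instead $\kappa>1$ in reduced form, some nonzero coefficient carries a non-integer exponent, so the series is multivalued around $\zeta$ and $f$ cannot be analytic there. For item 2, take the limit of the Puiseux series term by term along the path's end. The dominant term $c_{n_0}(z-\zeta)^{n_0/\kappa}$ diverges if $n_0<0$, equals $c_{n_0}$ if $n_0=0$, and tends to $0$ if $n_0>0$.
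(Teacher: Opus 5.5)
Your proposal is correct and follows essentially the same route as the paper's proof. You continue the $u_s$ along a path avoiding $\Xi[K]$, use the identity theorem to get $f=Q(z,u_1(z),\dots,u_c(z))$ along the path, match the continued branches with the branches at $\zeta$ on a slit neighbourhood whose slit avoids the tail of the path, and read items 1 and 2 off the shape of $S_Q$. The differences are minor: the paper chooses the path to avoid zeros of the denominator rather than arguing that poles are removable, and it identifies $S_Q$ with the Puiseux branch at $\zeta$ of an annihilating polynomial $H$ of $f$, a detour that your direct use of $S_Q$ as a convergent Puiseux series makes unnecessary.
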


Part 1 of the above theorem  gives a necessary and sufficient syntactic criterion for locating the singularities of $f(z)=\Delta^d(z)$, based on $S_Q(z)$.
Armed with this criterion, we can compute  the radius of convergence of $f(z)=\Delta^d(z)$ via Algorithm \ref{alg:radius}.  The algorithm   returns additional information (truncated Puiseux series at each relevant singularity), essential for the asymptotic analysis further below. We require that the $K(z,u)$ in input is {square-free} to ensure that $\Xi[K]$ is finite.

{\small
\begin{algorithm}[t]
\caption{Radius of convergence and dominant singularities}
\label{alg:radius}
\begin{algorithmic}[1]
\Statex \textbf{Input:} $K(z,u)$, a square-free kernel polynomial;
$Q(z,u_1,\dots,u_c) = T(z,u_1,\dots,u_c)/D(z,u_1,\dots,u_c)$,  a rational expression (both from Alg.~\ref{alg:ana}); $p$, an integer (truncation order).
\Statex \textbf{Output:}
The set $S = \{(\zeta_1,S_{\zeta_1}(z)),\dots,(\zeta_k,S_{\zeta_k}(z))\}$ of dominant singularities of $\Delta^d(z)$, together with   Puiseux series of  $\Delta^d(z)$ at those points, truncated at order $p$; or {\textbf{Fail}}.
\State Let
{  $\Lambda[K,D]  :=   \bigl\{ z \in \mathbb{C} \mid \exists\, u_1,\dots,u_c \in \mathbb{C} \text{ s.t. } K(z,u_1)=  \cdots=K(z,u_c)=D(z,u_1,\dots,u_c)=0 \bigr\}$}.
For the polynomial equations defining $\Lambda[K,D]$, compute an ideal $J \subseteq \mathbb{C}[z]$   eliminating $u_1,\dots,u_c$.
{\textbf{If}} $J = \langle 0 \rangle$  {\textbf{then}} \textbf{return} {\textbf{Fail}}.
\State $\Sigma := \Xi[K] \cup \Lambda[K,D]$,  is a \emph{finite} set of singularity candidates of $\Delta^d(z)$. Partition $\Sigma$ into subsets of elements having equal modulus, discarding any subset that does not contain a real element $\ge 1$. Denote the remaining subsets by $\Sigma_1, \dots, \Sigma_t$ and order them by increasing modulus.
\State Using the Newton's polygon algorithm, compute  all the Puiseux series of  $K(z,u)$ at $z=0$, and identify  among them the $c$ \emph{small branches} as those having     initial coeff. $n_0>0$.
\For{$i = 1$ \textbf{to} $t$}
\State $S \gets \emptyset$
\For{each $\zeta \in \Sigma_i$}
\State Using the Newton's polygon algorithm, compute all the Puiseux series of  $K(z,u)$ at $z=\zeta$.
\State 
Identify the $c$  Puiseux series in $(z-\zeta)$, say $S_1(z),\dots,S_c(z)$, that, as functions defined in a neighborhood of $\zeta$ slit along a ray non intersecting the disk  $\{z:|z|<|\zeta|\}\setminus \Sigma$, are analytically connected to the $c$ small branches at $z=0$  along a path contained  in  this disk.  
\State Compute
$
S_\zeta(z) := Q\bigl(z, S_1(z), \dots, S_c(z)\bigr) = \sum_{n \ge n_0} c_n (z-\zeta)^{n/\kappa}
$,
working in the field of formal Puiseux series in $(z-\zeta)$. Truncate the series $S_\zeta(z)$ to   order $\max\{p,n_0\}$, using sufficiently many terms of the series $S_j(z)$'s from step 7.
\State  \textbf{If} {$S_\zeta(z)$} is non-analytic  {\textbf{then}}  $S \gets S \cup \{ (\zeta, S_\zeta(z)) \}$ 
\EndFor
\State  \textbf{If} {$S \neq \emptyset$}    \textbf{then}
\textbf{return} $S$    \Comment{singularities  have been found}
\EndFor
\State \textbf{return} $\emptyset$ \Comment{no singularity in $\C$, $R = +\infty$}
\end{algorithmic}
\end{algorithm}
}\noindent

\begin{theorem}[correctness and 1-state completeness]\label{th:radius_corr}  (1) Assume Algorithm \ref{alg:radius} does not raise \textbf{Fail}, and returns $S=\{(\zeta_1,S_{\zeta_1}(z)),...,\zeta_k,S_{\zeta_k}(z)\}$. Then $\{\zeta_1,...,\zeta_k\}$ are  the dominant singularities of $\Delta^d(z)$, and   each $S_{\zeta_i}(z)$ is a  truncated Puiseux series of $\Delta^d(z)$ at $z=\zeta_i$. 
If $S\neq \es$ then $R=|\zeta_i|$ (for any $i$) is the finite radius of convergence of $\Delta^d(z)$. If $S=\es$ then  $R=+\infty$.
(2) Assume $K,Q$ are returned by Alg. \ref{alg:ana} for a 1-state program, and that $K$ is square-free. Then Algorithm \ref{alg:radius}  run with $K,Q$ does not raise \textbf{Fail}.
\end{theorem}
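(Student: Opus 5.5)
For part (1), the plan is to show that the finite set $\Sigma=\Xi[K]\cup\Lambda[K,D]$ contains every singularity of $\Delta^d(z)$, and then apply Lemma~\ref{lemma:R} together with Theorem~\ref{th:singAng}.

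First, take $z_0\notin\Sigma$ and any path from a point near $0$ to $z_0$ that avoids $\Xi[K]$. Along such a path the small branches $u_1,\dots,u_c$ continue analytically. Since $z_0\notin\Lambda[K,D]$, the denominator $D(z_0,u_1(z_0),\dots,u_c(z_0))$ is nonzero, so $Q(z,u_1(z),\dots,u_c(z))$ is analytic at $z_0$. Finiteness of $\Sigma$ comes from two facts. $\Xi[K]$ is finite because $K$ is square-free. $\Lambda[K,D]$ is contained in the zero set of a nonzero element of the elimination ideal $J\neq\langle 0\rangle$; this is exactly what the \textbf{Fail} test guarantees.

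Second, by Lemma~\ref{lemma:R}(2b) (Pringsheim), since $\Delta^d$ has nonnegative coefficients, if $R<\infty$ then $R$ is itself a singularity. Hence $R\in\Sigma\cap[1,\infty)$; here $R\ge 1$ because $\Delta^d(1)\le 1$. So discarding the modulus classes that contain no real element $\ge 1$ loses no dominant singularity. Classes of modulus $<R$ contain no singularity at all, so for them $S$ stays empty. At the first class $\Sigma_i$ with $|\zeta|=R$, Theorem~\ref{th:singAng} (applied with $|\zeta|\le R$ and paths inside the open disk, as in step~8) says that $S_\zeta=Q(z,S_1,\dots,S_c)$ coincides with $\Delta^d$ near $\zeta$. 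Part~1 of that theorem then shows that non-analyticity of $S_\zeta$ characterizes singularities. So the returned set is exactly the set of dominant singularities, each with its Puiseux expansion. If no class yields a singularity, $\Delta^d$ has no singularity at all, and $R=+\infty$ by Lemma~\ref{lemma:R}(1).

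One subtlety remains. Step~8 identifies the connected branches using paths inside the disk of radius $|\zeta|$ minus $\Sigma$. This region is connected, since $\Sigma$ is finite, and analytic continuation of the tuple within it is path-independent up to monodromy around points of $\Xi[K]$. Monodromy does not affect the value of $f=\Delta^d$ itself, because $f$ is single-valued in the disk $|z|<R$, and Theorem~\ref{th:singAng} is stated for any such path. I must still check that the rational combination $Q$ is insensitive to the choice of path, which follows because $S_Q=f$ for every admissible path.

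For part (2), with $N=1$ the kernel $K$ is $u^e(1-zS(u))$, and the unknowns are $M^d_{1,0},\dots,M^d_{1,e-1}$ with $c=e$. Solving the system~\eqref{eq:system} is then a Vandermonde-type system in the small roots. Following the classical kernel-method computation (Banderier–Flajolet), I expect the denominator $D$ to be, up to a monomial factor, the Vandermonde product $\prod_{s<t}(u_s-u_t)$, possibly multiplied by powers of $z$ and $u_s$. The polynomial equations $K(z,u_s)=0$ together with $D=0$ force two of the $u_s$ to coincide as roots of $K(z,\cdot)$, or force $z=0$ or $u_s=0$. A coincidence of two roots implies that the discriminant of $K$ in $u$ vanishes, and this discriminant is a nonzero polynomial in $z$ since $K$ is square-free. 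So the elimination ideal contains a nonzero polynomial, $J\neq\langle 0\rangle$, and the algorithm does not fail. The main obstacle is pinning down the exact form of $D$ produced symbolically by Algorithm~\ref{alg:ana}, which requires tracking the Cramer/Vandermonde structure through the solution and through the substitution $u=1$ in~\eqref{eq:Meq}. That substitution introduces the factor $1-zS(1)$, whose zero set in $z$ is finite, so it does not spoil $J\neq\langle 0\rangle$.
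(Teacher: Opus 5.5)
Your part (1) follows essentially the same route as the paper:
\begin{itemize}
\item dominant singularities lie in $\Sigma$;
\item Pringsheim together with $R\ge 1$ ensures the modulus-$R$ class is not discarded;
\item classes of smaller modulus contribute nothing;
\item Theorem~\ref{th:singAng} decides membership at modulus $R$.
\end{itemize}
This part is fine.

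Part (2) has a genuine gap, and as written the argument points the wrong way. The set $\Lambda[K,D]$ quantifies over \emph{arbitrary} tuples $(u_1,\dots,u_c)$ of roots of $K(z,\cdot)$, with no distinctness requirement. Suppose the Vandermonde factor $\prod_{s<t}(u_s-u_t)$ really divided $D$, with $c=e\ge 2$. Then for every $z\notin\Xi[K]$ (a cofinite set) you could set $u_1=\dots=u_e$ equal to one and the same simple root of $K(z,\cdot)$, which gives $D=0$. So $\Lambda[K,D]$ would be infinite, every polynomial in the elimination ideal would vanish identically, $J=\langle 0\rangle$, and the algorithm \emph{would} raise \textbf{Fail}. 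Your step ``two $u_s$ coincide, hence the discriminant of $K$ vanishes'' conflates two variables being assigned the same root with $K(z,\cdot)$ having a repeated root.

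What must be shown is that the Vandermonde factor cancels in the reduced expression $Q=T/D$. This follows by interpolation. Set $P(u):=u^e z\,r(u)\cdot\widetilde{M}^d_{1,h}(z)$. It has degree $<e$ in $u$, and $P(u_s)=u_s^{e+d}$ for $s=1,\dots,e$. Since the $u_s$ are distinct indeterminates, $P$ equals the remainder of $u^{e+d}$ modulo the monic polynomial $\prod_s(u-u_s)$, whose coefficients lie in $\mathbb{Z}[u_1,\dots,u_e]$. Hence $z\,r(1)\cdot\widetilde{M}^d_{1,h}(z)=P(1)$ is a polynomial in the $u_s$ alone; for example it is $1-\prod_s(1-u_s)$ when $d=0$.
\begin{itemize}
\item When $b_1=0$, this gives $\Delta^d=P(1)$, i.e.\ $D=1$. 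That is exactly what the paper's one-line proof asserts, so $\Lambda[K,D]=\emptyset$ and $J=\langle 1\rangle$.
\item When $b_1>0$, the identity $M^d_1(z,1)=(1-P(1))/(1-z(1-b_1))$ makes $D=1-z(1-b_1)$. This is a nonzero polynomial in $z$ only, so $J\neq\langle 0\rangle$ still holds.
\end{itemize}
Your remark about the factor $1-zS(1)$ correctly covers the $b_1>0$ case, which the paper's ``$D=1$'' glosses over. But it only helps once the Vandermonde factor has been shown to disappear.
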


\begin{remark}[computational details, complexity]\label{rem:symbolicvsexact}{
\begin{itemize}
\item
In step 1, we make use of the  standard notion of \emph{elimination ideal}   from algebraic geometry, to check if $\Lambda[K,D]$ is finite: see \cite[Ch.3]{Cox} for a precise definition and algorithms.
\item Step  2  of Algorithm \ref{alg:ana} involves   roots of polynomials. As standard in symbolic computation, such  roots  will be  not computed explicitly,    rather one works in  the algebraic number field   obtained by extending   $\QQ$ with them.
\item In step 8, we must establish  an analytic continuation among branches at two different locations, as per Def. \ref{def:conn}: this problem is called \emph{Homotopy Continuation} and is well-studied in numerical algebraic geometry. Effective algorithms and software tools for it exist, see e.g. \cite{bates2023numerical,duff2023polynomial}. We provide further details in  Appendix \ref{app:details}.
\item The decision procedure for step 10 is based on Theorem \ref{th:singAng}(2) and is also discussed in Appendix  \ref{app:details}.
\end{itemize}
\noindent
The   complexity of this algorithm depends on the   actual algebraic number field representation and on the Homotopy Continuation method  hinted above. We leave this aspect for future work.}
\end{remark}

\ifmai
\begin{remark}[exact vs. numeric computation]\label{rem:symbolicvsexact}{
Step  2  of Algorithm \ref{alg:ana} involves computing roots of polynomials. As standard in symbolic computation, such  roots  will be  handled implicitly: one does not compute them explicitly,   rather works in  the algebraic number field   obtained by extending   $\QQ$ with them.
At any stage, an exact root   can be converted into numeric form  with an  arbitrary precision.
%
The   complexity of the algorithm depends on the   actual algebraic number field representation and on the Homotopy Continuation method  hinted above. We leave this aspect for future work.}
\end{remark}
\fi

When $R>1$, the result  of Algorithm \ref{alg:radius}   gets us     immediately   exponential bounds on the coefficients of $\Delta^d(z)$, based on Lemma \ref{lemma:R} 2(a). Indeed, it is sufficient to select $r\in (1,R)$ to get an exponentially decreasing bound $c_n\leq \Delta^d(r)/r^n$. Note that according to  Theorem \ref{th:singAng}(2), for $r$ inside the disk of convergence, $\Delta^d(r)$ can be computed as  the first nonzero coefficient of the Puiseux expansion at $z=r$ of $f(z)=\Delta^d(z)$.
When $S\neq \es$,  $R$ is finite and, regardless of whether $R=1$ or $R>1$,  we can apply the standard \emph{transfer method} from analytic combinatorics to compute  asymptotic formulae in $n$ for the coefficients $c_n$. We give a quick review of this method. Roughly speaking, one starts from $(\zeta,S_\zeta(z))\in S$ and, after rewriting   $(z-\zeta)^\alpha$ as $(-\zeta)^{-\beta}\cdot(1-z/\zeta)^{-\beta}$ ($\beta= -\alpha$), builds the asymptotic formula in $n$   term-by-term, by applying the following correspondence:
{
\begin{align*}
a\cdot (1-z/\zeta)^{-\beta} & \ \leadsto \ \ a\cdot \zeta^{-n} \cdot \frac{n^{\beta-1}}{\Gamma(\beta)} \ \ \ \text{ (provided $\beta\notin\Z$). }
\end{align*}}\noindent
Here $\Gamma(z)$     is the analytic extension of   factorial   to the whole complex plane.
We have the following classical result, see  \cite[Th.VI.4,Th.VI.5,Th.VII.8]{Flajo}.

\begin{theorem}[asymptotic analysis]\label{th:asympt}
Let $f(z)$ be  an algebraic function analytic at $z=0$,  with a \emph{finite} radius of convergence $R>0$ of its Taylor series, $f(z)=\sum_{n\geq 0} c_n z^n$. Let $Z\neq \es$ be the  set of its dominant singularities.
For each   $\zeta\in Z$, consider the Puiseux series at $z=\zeta$ of $f(z)$ truncated at  order $p$, written as:  $S_\zeta(z)=\sum_{-\beta\leq  p}a_{\zeta,\beta}\cdot (1-z/\zeta)^{-\beta}$.
Then, provided the   RHS is not identically zero (as a function of $n$): 
$$c_n \sim \; \sum_{\zeta\in Z,\,-\beta\leq p,\,\beta\notin \Z}   a_{\zeta,\beta}\cdot \zeta^{-n}\cdot \frac{n^{\beta-1}}{\Gamma(\beta)}$$.
\end{theorem}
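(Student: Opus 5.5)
This is the classical transfer theorem for algebraic functions, so I would assemble it from singularity analysis (Flajolet–Sedgewick, Th. VI.4, VI.5, VII.8) rather than prove it from scratch. The plan has four steps: isolate the dominant singularities, get a $\Delta$-type analytic continuation at each, transfer the singular expansion term by term, and sum the contributions.

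\textbf{Step 1 (finitely many dominant singularities).} Since $f$ is algebraic and annihilated by a nonzero polynomial, its singularities lie in a finite exceptional set. Hence $Z$ is finite, and by Lemma~\ref{lemma:R}(1) every $\zeta\in Z$ satisfies $|\zeta|=R$. There is also some $R'>R$ such that $f$ has no singularity in $\{R\le |z|<R'\}\setminus Z$. Analytic continuation of $f$ around the circle $|z|=R$ is therefore possible except at the points of $Z$. This gives a domain made of a disk of radius $R'$ with small indentations (``$\Delta$-domains'') at each $\zeta\in Z$, on which $f$ is analytic. By Theorem~\ref{th:singAng}, near each $\zeta$ the function $f$ coincides with its Puiseux series in $(z-\zeta)^{1/\kappa}$. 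That series converges in a slit neighborhood, and we may take the slit along the outward radial ray so that it lies outside the $\Delta$-domain.

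\textbf{Step 2 (local expansion).} Rewrite each Puiseux series in powers of $(1-z/\zeta)$ as in the statement, $S_\zeta(z)=\sum_{-\beta\le p} a_{\zeta,\beta}(1-z/\zeta)^{-\beta}$. The remainder is $f(z)-S_\zeta(z)=O\bigl((1-z/\zeta)^{p+1/\kappa}\bigr)$ uniformly as $z\to\zeta$ in the $\Delta$-domain. This holds because the tail of a convergent Puiseux series is bounded by its leading term times a constant.

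\textbf{Step 3 (transfer).} Apply the standard transfer. For $\beta\notin\mathbb{Z}$,
\[
[z^n](1-z/\zeta)^{-\beta}=\zeta^{-n}\,\frac{n^{\beta-1}}{\Gamma(\beta)}\,(1+O(1/n)).
\]
For $\beta\in\mathbb{Z}$ with $\beta\le 0$, the term is a polynomial and contributes only finitely many coefficients, so it is asymptotically zero. Integer $\beta>0$ cannot occur together with nonzero $a$ at an algebraic singularity unless it is a pole. If $f$ had a pole, the pole terms would also transfer, so the restriction $\beta\notin\mathbb{Z}$ in the statement needs care here. For probability generating functions with $R>0$ this is harmless, because a pole term at integer $\beta$ contributes $n^{\beta-1}/(\beta-1)!$. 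I would note that the formula implicitly uses $1/\Gamma(\beta)=0$ at nonpositive integers. The $O$-remainder transfers to $O\bigl(\zeta^{-n} n^{-p-1/\kappa-1}\bigr)$ by the big-O transfer theorem (Th.~VI.3).

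\textbf{Step 4 (summing over $Z$).} Handle the several dominant singularities by the composite contour argument of Th.~VI.5. Since $Z$ is finite, we integrate over a contour that follows $|z|=R'$ away from $Z$ and makes a Hankel-type detour at each $\zeta$. Then $c_n$ equals the sum of the local contributions plus an error that is exponentially smaller, of order $R'^{-n}$.

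\textbf{Main obstacle.} The delicate part is the ``$\sim$'' claim. When several $\zeta$ share the modulus $R$, their leading terms can cancel for particular $n$, or even for infinitely many $n$, because of periodic phases $\zeta^{-n}$. This is exactly why the statement adds the proviso that the right-hand side is not identically zero. Even so, to make ``$\sim$'' meaningful I would read it as: $c_n$ equals the displayed sum plus an error of smaller order than the largest retained term. The error is $O\bigl(R^{-n} n^{-p-1/\kappa-1}\bigr)$, and $p$ is chosen large enough that this is dominated.
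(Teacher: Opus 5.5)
Your route is the paper's own. The paper gives no argument beyond citing \cite[Th.~VI.4, VI.5, VII.8]{Flajo}, and your Steps 1--4 are exactly how those results combine: $\Delta$-continuation at finitely many dominant singularities from algebraicity, termwise and big-$O$ transfer, and the composite contour over $Z$. That core argument is sound. What needs fixing is how you reconcile its output with the statement as written.

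\emph{Pole terms.} Your claim that the exclusion $\beta\notin\Z$ is harmless for probability generating functions is false. Terms with $\beta$ a positive integer are polar terms, and pgf's with finite $R$ routinely have them at $R$. Every asymptotic formula in Section~\ref{sec:en} comes from a simple pole $\beta=1$; for instance the pBPA has $\Delta^0(z)=z/(3-2z)$. Polar terms can also coexist with fractional terms at the same singularity. Transfer gives them $\zeta^{-n}n^{\beta-1}/(\beta-1)!$, which typically dominates. Take the pgf $f(z)=\bigl(\frac{1}{2-z}+1-\sqrt{1-z/2}\bigr)/C$, with $C$ normalising and $Z=\{2\}$. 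The literal formula keeps only $\beta=-\frac12$ and predicts $c_n$ of order $2^{-n}n^{-3/2}$, whereas in fact $c_n\sim 2^{-n-1}/C$. So your argument proves the formula with $\beta\notin\Z$ replaced by $\beta\notin\Z_{\le 0}$, i.e.\ all $\beta$ with the convention $1/\Gamma(\beta)=0$ at non-positive integers. Present this as a correction of the statement, not as agreement with it.

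\emph{The meaning of ``$\sim$''.} Your additive reading is likewise a needed correction rather than a weakening. The paper's claim that $\sim$ holds for every $p$ giving a right-hand side that is not identically zero is false. Take $f(z)=\frac32 z+(1-z)^{3/2}-\sqrt{1-z^2}$, which has nonnegative coefficients and $Z=\{1,-1\}$, and set $p=1$. The right-hand side is then $\frac{\sqrt2}{2\sqrt\pi}\,(1+(-1)^n)\,n^{-3/2}$, which vanishes for odd $n$, although $c_n=[z^n](1-z)^{3/2}>0$ there.

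\emph{Your error bound.} The bound you quote is incomplete. Because the right-hand side keeps only the leading terms $n^{\beta-1}/\Gamma(\beta)$, the $1+O(1/n)$ factors from your Step 3 add an error $O(R^{-n}n^{\beta^*-2})$, where $\beta^*$ is the largest retained $\beta$. The correct conclusion is that $c_n$ equals the right-hand side plus $O(R^{-n}n^{\beta^*-1-1/\kappa})$, with $\kappa$ the largest ramification index over $Z$. This yields $\sim$ only along those $n$ where the retained sum stays comparable to $R^{-n}n^{\beta^*-1}$.
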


The higher the truncation order $p$, the more accurate the approximation; but the asymptotic statement of Theorem    \ref{th:asympt} holds true for \emph{any} order $p$ yielding a nonzero sum.

\begin{example}\label{ex:asymp} For the random walk program from Example \ref{ex:run6}, Algorithm \ref{alg:ana} returns $K(z,u)=  -u^2z + 2u - z$, $c=1$ and $Q(z,u_1)=u_1$, as seen. Algorithm \ref{alg:radius}   returns $S=\{(-1,S_{-1}(z)), (1,S_1(z))\}$, hence $R=1$, with $S_{-1}(z)=-1 + \sqrt{2} \sqrt{z + 1}$ and $S_{1}(z)=1 + \iota \sqrt{2}\sqrt{z - 1}$. Applying the transfer method (Th. \ref{th:singAng}), we obtain $\Delta^0(z)=\sum_{n\geq 0} c_n\cdot z^n$ with
{
$$c_n \ \ \sim \ \ \frac{ (1-(-1)^{n}) \sqrt{2} }{2\sqrt{\pi} n^{3/2}}$$}\noindent
Additionally, for any $d$ one can evaluate $\Delta^d(1)=S_1(1)=1$ and $\frac {\mathrm{d}}{\mathrm{d} z}\Delta^d(1)=\frac{\mathrm{d}} {\mathrm{d} z} S_1(1)=\infty$.
For the 2-state running example, computation of the exact radius  is highly impractical, as  11th degree polynomials are involved. We can nonetheless consider the smallest real candidate singularity  $R_0\in \Sigma_1$, according to the terminology of Algorithm \ref{alg:radius}, as a lower-bound of the exact radius. We find $R_0=-\frac 4 3+\frac 2 3\sqrt{13}\approx 1.0704$.
 Based on Th. \ref{th:singAng}(2), for any given $d$ we can compute   $\Delta^d(R_0-\epsilon)$ (with $\epsilon \in (0,R_0-1)$; e.g. $\Delta^0(1.05)\approx 0.87991$) and, based on Lemma \ref{lemma:R}(2,a) we can still conclude
{
$$c_n \ \leq \ \ \frac{\Delta^d(R_0-\epsilon)}{(R_0-\epsilon)^n}\ \ = \ \ 0.87991 ...\times 0.95238... ^n.$$}\noindent
\end{example}

\section{Enhancements, tail bounds and further examples}\label{sec:en}
{\sc Maple} code for the examples presented in this section is available online\footnote{\url{https://osf.io/3kfr2/overview?view_only=1e39537d4d44485e98d288922f5286d8}.}.

\paragraph{Simplification}\label{sec:ext}
Algorithm~\ref{alg:ana} fails if there are more $M^d_{j,h}(z)$'s in~\eqref{eq:system} than small
branches. As an example, consider the variation of our running example with transitions from state $q=2$
switched (see below); its kernel polynomial has only $c=1$ small branch, while Theorem~\ref{th:char2}
involves three gf's $M^d_{1,0}(z),M^d_{2,0}(z),M^d_{2,1}(z)$.
\begin{lstlisting}[mathescape=true,basicstyle=\ttfamily\footnotesize]
while($d\geq0$){
if ($q=1$) {$\frac{1}{2}$:$d:=d+1\,;\,q:=2$[]$\frac{1}{2}$:$d:=d-1\,;\,q:=1$}
else      {$\frac{2}3$:$d:=d+1\,;\,q:=2$[]$\frac{1}6$:$d:=d-2\,;\,q:=1$[]$\frac{1}6$:$d:=-1$}}
\end{lstlisting}
Algorithm \ref{alg:ana} can be enhanced with a  {\emph{simplification} method}, that tries to    eliminate  enough $M^d_{j,h}(z)$'s to  make the system solvable by: (1) expressing $M^d_{j,h}(z)$ in
terms of the bivariate $M^d_j(z,u)$; (2) expressing $M^d_j(z,u)$ in terms of $M^d_1(z,u)$ via
system~\eqref{eq:Meq}. The method is sound, but not guaranteed to succeed in general. For the above example, this yields $M^d_{2,0}(z)=0$ and $M^d_{2,1}(z)=\tfrac 1 2 z M^d_{1,0}(z)$,
reducing to a single-equation system solvable with $c=1$ small branch, and with a 2nd degree kernel polynomial which can be solved explicitly.  This leads to explicit formulae for $\Delta^d(z),\Delta'^d(z)$, omitted here. Applying Algorithm \ref{alg:radius}, we get  $R= \frac{7-\sqrt{13}}{3}\approx 1.131$ and
{\begin{equation*}
c_n \ \ \sim \  \frac{ -247+61 \sqrt{13}  }{-3276+468 \sqrt{13}} \cdot R^{-n}
\end{equation*}}

\paragraph{Language extension}\label{sec:gcp+}
A natural extension is to allow \emph{nonnegative absolute updates}: assignments of the form $d:=k$, for any integer $k\geq 0$. An absolute
update $p:d:=k$ at state $j$ gives a term $p\cdot u^k\cdot M^d_j(z)$ in~\eqref{eq:Meq0}; the
$M^d_j(z)$'s are eliminated via the identity $M^d_j(z)=M^d_j(z,1)$.
\ifmai
\begin{enumerate}
\item \emph{Guards with zero-tests}: programs allowing \texttt{if} $(q=j\land\psi)\cdots$ with
$\psi\in\{d=0,d>0\}$. The balancing expressions $B_i$ in~\eqref{eq:Meq0} are enriched to subtract
$M^d_{i,j}(z)$'s corresponding to transitions forbidden by the stricter guards.
\item \emph{Nonnegative absolute updates}: programs allowing $d:=k$ for integer $k\geq 0$. An absolute
update $p:d:=k$ at state $j$ gives a term $p\cdot u^k\cdot M^d_j(z)$ in~\eqref{eq:Meq0}; the
$M^d_j(z)$'s are eliminated via the identity $M^d_j(z)=M^d_j(z,1)$.
\end{enumerate}
\fi
As an example of GCP program falling in  this format, consider the program in Fig. \ref{fig:zeroconf} (left), a version of the  {\emph{zeroConf}}   from \cite[Fig.12]{batz2023inductive}. In this      protocol,  a client repeatedly tries to establish  a network connection to a server,  up  to a maximum of $C$ attempts. The counting of attempts can be probabilistically reset. In terms of running time, our program in Fig. \ref{fig:zeroconf} (left), when  executed with an initial values $d=C-1$, $q=1$,  is equivalent to the program  in \cite[Fig.12]{batz2023inductive}. 
Note that in our program the \texttt{else} branch involves an absolute update that   `resets' the counter to $C-1$. 
%
Applying Algorithm \ref{alg:ana},   extended as outlined above,
returns a pair $(K,Q)$, with $K$  a 1st degree polynomial in $u$ that can be solved directly for $u(z)$.  This leads to the   following explicit generating function, which happens to be rational; note that we must consider the initial counter value $d=C-1$, as explained above. Algorithm \ref{alg:ana}  can be implemented to treat $\lambda$ and  $C\geq 1$ symbolically:
$\Delta^{C-1}(z)  = {\frac { \left( \left( 1-\lambda \right) {z}^{2}+2\,\lambda\,z
-2\right)  \left( \lambda\,z \right) ^{C}+{z}^{2} \left(\lambda -1
\right)  }{{z}^{2} \left( \lambda-1  \right)  \left( \lambda\,z
\right) ^{C}+ \left( 1-\lambda \right) {z}^{2}+2\,\lambda\,z\,-2}}$.
In order to apply Algorithm \ref{alg:radius} and determine $R$, we  must fix specific values for $C$ and $\lambda$. For instance, with $C=10,\lambda=9/10$, we get $R\approx 1.1929$ (the exact value being  a root of a 11-th degree polynomial) and   the asymptotics
{
\begin{equation*}
c_n \ \sim\  \gamma\cdot R^{-n} \ \ \ \ \ \text{($\gamma\approx 0.4285$).}
\end{equation*}}\noindent
%
\ifmai
\section{Further Examples}\label{sec:examples}
\begin{figure}[H]
\begin{minipage}[t]{0.4\textwidth}
{\small\begin{center}
\begin{lstlisting}[mathescape=true,basicstyle=\ttfamily\scriptsize]



while(failed<10 & sent<S){
{failed:=0;sent:=sent+1 [0.99]
	failed:=failed+1} }

\end{lstlisting}
\end{center}}
\end{minipage}
\begin{minipage}[t]{0.6\textwidth}
{\small\begin{center}
\begin{lstlisting}[mathescape=true,basicstyle=\ttfamily\scriptsize]
while(tosend>=0){
if failed=0
{failed:=0;tosend:=tosend-1 [0.99] failed:=1}
else if failed=1
{failed:=0;tosend:=tosend-1 [0.99] failed:=2}
...
else if failed=8
{failed:=0;tosend:=tosend-1 [0.99] failed:=9}
else
{failed:=0;tosend:=tosend-1 [0.99] tosend:=-1}}
\end{lstlisting}
\end{center}\vfill}
\end{minipage}
\caption{\small\textbf{Left}: BRP probabilistic program adapted from~\cite{batz2023inductive}.
\textbf{Right}: GCP version of the program on the left. $S\geq 0$ is a constant.
$P[p]Q$ stands for $p:P\,[]\,(1-p):Q$.}\label{fig:exkatoen}
\end{figure}

We have implemented Algorithm~\ref{alg:ana} in {\sc Maple}\footnote{Code and examples available from the anonymous repository~\cite{OSFAnonymous2025}.}, applied to nontrivial probabilistic programs from the literature. Each analysis took under 0.5s on an ordinary laptop.

\begin{example}{\em
Let us consider an example of 1-state random walk that does not terminate with probability $1$: \texttt{while($d\geq 0$)\{$\frac 1 3:d:=d-1$[]$\frac 2 3:d:=d+1$\}}.
Algorithm~\ref{alg:ana} yields $\Delta^d(1)=2^{-d-1}$, $\Delta'^d(1)/\Delta^d(1)=3d+3$,
$R\geq\frac 3 4\sqrt{2}\approx 1.061$, and explicit formula $\Delta^d(z)=\big(\frac{3-\sqrt{9-8z^2}}{4z}\big)^{d+1}$.  Our implementation can handle the input $d\geq 0$ symbolically. The whole analysis takes no more than $0.03$s.
}\end{example} \label{ex:brp}

\begin{example}{\em
Fig.~\ref{fig:exkatoen} (left) shows an idealized version of the
\emph{Bounded Retransmission Protocol (BRP)} taken from~\cite{batz2023inductive}. Setting $d=\texttt{tosend}=\texttt{S-1-sent}$ and
$q=\texttt{failed}$ converts it to the GCP program in Fig.~\ref{fig:exkatoen} (right).
Running Algorithm~\ref{alg:ana} yields: $\Delta^{S-1}(1)=1$,
$\Delta'^{S-1}(1)=101010101010101010100\cdot(1-(1-10^{-20})^S)$, and $R_0=1$. Analysis took $\approx 0.125$s.
For $S=8\times 10^9$ from~\cite{batz2023inductive}, we get $\Delta'^{S-1}(1)\approx 8.08\times 10^9$.
}\end{example}

\fi
\usetikzlibrary{automata, positioning}
\begin{figure}[t]
\begin{subfigure}{0.4\textwidth}
\footnotesize
\begin{lstlisting}[
mathescape=true,
basicstyle=\footnotesize\ttfamily,
columns=fullflexible,
%caption={Program $P$},
label={lst:template}
]
while ($d\geq 0$) {
	if ($q=1$)
        {$\lambda$: $d:=d-1$ [] $1-\lambda$: $d:=C-1$; $q:=2$}
    else
        {1/2: $q:=1$ [] 1/2: $d:=-1$} }
\end{lstlisting}
\end{subfigure}
\begin{subfigure}{0.5\textwidth}
\begin{center}
\scalebox{.9}{
\begin{tikzpicture}[->, >=Stealth, auto, node distance=4.5cm, semithick,every node/.style={scale=1.0, text=black}, thick, draw=black]
	\tikzstyle{every state}=[fill=white, draw=black, text=black, minimum size=1pt, draw=black]

\node[state] (B) {1};
\node[state, right of=B] (A) {2};

\path (A) edge[bend left=30] node[above] {$\frac{1}{2},\,u^{0}$} (B);

\path (B) edge[bend left=30] node[above] {$1-\lambda, d:=C-1$} (A);

\path (B) edge[loop above] node[align=center] {$\lambda,\,u^{-1}$} (B);

\path (A) edge[dashed] node[above] {$\frac{1}{2}$} ++(1.2,0);
\end{tikzpicture}}
\end{center}
\end{subfigure}
\caption{\small\textbf{Left}: zeroConf protocol, adapted from \cite{batz2023inductive}; $\lambda\in (0,1)$ and   integer $C\geq 1$ are   parameters. \textbf{Right}:    PDA representation; the label  $d:=C-1$ marks the transition arising from an absolute update.}
\label{fig:zeroconf}
\end{figure}

\ifmai
\begin{example}{\em
In the \emph{zeroConf} protocol (Fig.~\ref{fig:zeroconf}) a client repeatedly tries to connect to a
server, up to $N$ attempts, with probabilistic reset. This falls in the extended GCP format of
Subsection~\ref{sec:gcp+}(2). Our analysis yields the rational generating function:
\begin{equation*}
\Delta^{N-1}(z)=\frac{((1-\lambda)z^2+2\lambda z-2)(\lambda z)^N+z^2(\lambda-1)}{z^2(\lambda-1)(\lambda z)^N+(1-\lambda)z^2+2\lambda z-2}\,,
\end{equation*}
implying $\Delta^{N-1}(1)=1$ and $\Delta'^{N-1}(1)=2\frac{(2-\lambda)(\lambda^N-1)}{(\lambda-1)(\lambda^N+1)}$.
For $\lambda=1-10^{-9}$, $N=10^8$ (as in~\cite{batz2023inductive}), we find $R=1/\lambda$.
}\end{example}

\fi

\paragraph{Comparison of tail bounds} The only study we are aware of concerning the distribution of running time in pPDAs and related formalisms is  Br\'{a}zdil et al.'s \cite{BrazdilKieferKuceraVarekova15}. It is interesting to compare our tail bounds,  as implied by Lemma \ref{lemma:R}(2,a), with theirs. For an almost surely terminating 1-state pPDA -- pBPA, in their terminology -- the running-time random variable $\mathrm{RT}$  satisfies, for sufficiently large $n$ (see \cite[Th.4.1]{BrazdilKieferKuceraVarekova15}):
{
\begin{equation}\label{eq:esparza}
\Pr(\mathrm{RT} \geq n) \;\leq\; \mathrm{e}\cdot \mathrm{e}^{- \frac{n}{8E_{\max}^2}}
\end{equation}}\noindent
where $E_{\max}$ is the maximal expected running time, taken over all possible starting stack symbols. This bound  is generic: it will yield  the same exponential decrease rate for (even wildly) different pBPAs with the same $E_{\max}$. In particular, as for nontrivial pBPAs $E_{\max}\geq 1$, the RHS of \eqref{eq:esparza}  cannot decrease faster than $\sim \mathrm{e}^{-\frac n 8}\approx 0.8824^n$. On the other hand, our asymptotic analysis delivers tight results.

A simple example is sufficient to appreciate this difference. Consider
the   pBPA with one stack symbol $X$  defined by the two rules:
$X \to \epsilon$  (prob. $1/3$), $X\to  X$  (prob. $2/3$). Termination of a pBPA occurs   as soon as the empty string $\epsilon$ is produced. So this pBPA is equivalent,  in terms of running time, to the following GCP, with initial $d=0$:
\begin{lstlisting}[mathescape=true,basicstyle=\ttfamily\footnotesize]
while $(d\geq 0)$ { 1/3: $d:=d-1$ [] 2/3: $d:=d$ }
\end{lstlisting}
Now $E_{\max}=3$, hence \eqref{eq:esparza} yields the bound $a_n:=\mathrm{e}\cdot \mathrm{e}^{-\frac n{72}} = \mathrm{e}\cdot 0.9862^n$.
Application of our Algorithms  \ref{alg:ana} and then \ref{alg:radius} yields $R=3/2$. Choose e.g. $R_0:=1.4$ and compute,  via Theorem \ref{th:singAng}(2), $\Delta^0(R_0)=7$. Then Lemma \ref{lemma:R}(2,a) guarantees  $c_n\leq 7/R_0^n$ for the coefficients of $\Delta^0(z)$. Standard algebraic manipulations yield the $\mathrm{RT}$-tail bound: $\Pr(\mathrm{RT}>n)\leq b_n:=\,\frac{7}{R_0^n(R_0-1)}$. The ratio $b_n/a_n$ is $<1$ already for $n\geq 6$, the smallest $n$ for which \eqref{eq:esparza} is guaranteed. And decreases exponentially; e.g.  $b_{30}/a_{30}<0.0004$. The transfer theorem gives the exact asymptotics
{%
$$
c_n\ \ \sim \ \ \left(\frac 2 3\right)^n
\,.$$}
\ifmai

Following the approach of Flajolet \cite{Flajo}, exponential tail bounds can be
obtained directly via generating functions. For every number of iterations $k \geq 0$, an application
of Markov's inequality yields, for $r \in (1,R)$
\begin{equation}\label{eq:tailbound}
\Pr(\mathrm{C} > k) \;\leq\;  \Delta^d(r)\cdot r^{-k}.
\end{equation}
\noindent
On the other hand, results by Esparza et al. in \cite{esparza} establish, under suitable finiteness assumptions,
\begin{equation}\label{eq:esparza}
\Pr(\mathrm{C} \geq k) \;\leq\; \exp\!\left(1 - \frac{k}{8E_{\max}^2}\right),
\end{equation}
where $E_{\max}$ reduces to the expected runtime given termination, for GCP programs.
Figure \ref{fig:bounds} plots
bounds \eqref{eq:tailbound} and \eqref{eq:esparza} as a function of $k$ for the random walk of Example \ref{ex:brp} and the running example \ref{ex:run}.
In both cases, bound \eqref{eq:tailbound} is strictly tighter for all values of $k$.
This difference illustrates the accuracy provided by closed-form generating functions:
rather than relying on the aggregate quantity $E_{\max}$, \eqref{eq:tailbound} exploits the
exponential decay rate encoded in the radius of convergence $R$.
\fi
\ifmai
\begin{figure}[ht]
\centering
\centering
\begin{minipage}{0.49\textwidth}
\centering
\includegraphics[width=1.2\textwidth]{figures/n8.pdf}
\end{minipage}
\begin{minipage}{0.49\textwidth}
\centering
\includegraphics[width=1.2\textwidth]{figures/running.pdf}
\end{minipage}
\caption{Comparison of tail bounds \eqref{eq:tailbound} and~\eqref{eq:esparza} as a function of the number of iterations $k$, on a log scale.
\textbf{Left}: tail bounds for the random walk of Example~\ref{ex:brp}.
\textbf{Right}: tail bounds for the running example~\ref{ex:run}.}
\label{fig:bounds}
\end{figure}
\fi
\section{Conclusion}\label{sec:concl}
We have presented an algebraic framework for exact quantitative analysis of GCP programs, grounded in an
operational semantics via PDA runs. The key insight is that the operational generating function of a GCP
program is algebraic, characterized via a kernel polynomial whose analysis yields exact asymptotic information  and exponential tail bounds. The method is sound for GCP programs and complete
for 1-state programs.

Future work includes extending the framework to broader classes of probabilistic loops, developing
automated detection of suitable program structures (cf.\ the zeroConf example), and exploring
hybrid approaches combining our exact analysis with martingale-based or proof-system techniques.

\bibliographystyle{cas-model2-names}
\bibliography{references}

\newpage
\appendix
\section{Proofs}\label{app:proofs}
\subsection{Section \ref{sec:GCPP}}\label{app:sec:GCPP}

\begin{proof_of}{Theorem \ref{th:char2}}
For brevity, write
\[
D_n^d := D_{1,n}^d .
\]
If \(d<0\), then by definition \(D_n^d=[n=0]\), hence
\[
\Delta^d(z)=\sum_{n\ge 0} D_n^d z^n = 1.
\]
Assume now \(d\ge 0\). We claim that, for every \(n\ge 0\),
\begin{equation}\label{eq:coeff-rec-thm5}
D_{n+1}^d
=
\sum_{j=1}^N
\left(
\sum_{h=0}^{e_j-1} p_j^{<-h}\, m_{j,n,h}^d
\;+\;
b_j\, m_{j,n}^d
\right).
\end{equation}
Indeed, let \(\tau\) be a terminal run of length \(n+1\), starting from
state \(1\) and altitude \(d\). Since \(d\ge 0\), \(\tau\) has a unique
decomposition
\[
\tau=\sigma,\eta,
\]
where \(\sigma\) is the prefix of length \(n\), and \(\eta\) is the last
transition. Because \(\tau\) is terminal and all proper prefixes of a
terminal run are non-terminal, \(\sigma\) is a meander. Say \(\sigma\)
ends in state \(j\) and altitude \(h\ge 0\). Then \(\eta\) is of exactly
one of the following two kinds.
\noindent
\emph{(i) Absolute terminating update.}
The last transition is an absolute update from state \(j\), which occurs
with total probability \(b_j\). The total weight of meanders \(\sigma\)
of length \(n\) ending in state \(j\) and arbitrary nonnegative altitude
is \(m_{j,n}^d\). Therefore these runs contribute
\[
b_j\, m_{j,n}^d .
\]
\noindent
\emph{(ii) Relative terminating update.}
The last transition is a relative update \(k<0\) from state \(j\) such that
\(h+k<0\), equivalently \(k<-h\). By definition, the total probability
of all such relative updates from state \(j\) is \(p_j^{<-h}\). The total
weight of meanders \(\sigma\) of length \(n\) ending in state \(j\) and
altitude \(h\) is \(m_{j,n,h}^d\). Hence these runs contribute
\[
p_j^{<-h}\, m_{j,n,h}^d .
\]
Moreover, if \(h\ge e_j\), no such transition exists: indeed, by
definition of \(e_j\), every negative relative update \(k\) from state \(j\)
satisfies \(k\ge -e_j\), hence \(h+k\ge 0\). Therefore only
\(h=0,\dots,e_j-1\) can contribute.
The two cases are disjoint and exhaustive, and summing over all
possible final states \(j\) and admissible altitudes \(h\) yields
\eqref{eq:coeff-rec-thm5}.
Now multiply \eqref{eq:coeff-rec-thm5} by \(z^{n+1}\) and sum over all
\(n\ge 0\). Using the definitions of \(\Delta^d(z)\), \(M_{j,h}^d(z)\) and
\(M_j^d(z)\), we obtain
\[
\sum_{n\ge 0} D_{n+1}^d z^{n+1}
=
z\sum_{j=1}^N
\left(
\sum_{h=0}^{e_j-1} p_j^{<-h} \sum_{n\ge 0} m_{j,n,h}^d z^n
\;+\;
b_j \sum_{n\ge 0} m_{j,n}^d z^n
\right),
\]
that is,
\[
\Delta^d(z)-D_0^d
=
z\sum_{j=1}^N
\left(
\sum_{h=0}^{e_j-1} p_j^{<-h} M_{j,h}^d(z)
\;+\;
b_j M_j^d(z)
\right).
\]
Since \(d\ge 0\), we have \(D_0^d=0\). Therefore
\[
\Delta^d(z)
=
z\sum_{j=1}^N
\left(
\sum_{h=0}^{e_j-1} p_j^{<-h} M_{j,h}^d(z)
\;+\;
b_j M_j^d(z)
\right).
\]
Combining this with the already established case \(d<0\) gives
\[
\Delta^d(z)
=
[d<0]\cdot 1
+
[d\ge 0]\cdot
z\sum_{j=1}^N
\left(
\sum_{h=0}^{e_j-1} p_j^{<-h} M_{j,h}^d(z)
\;+\;
b_j M_j^d(z)
\right),
\]
as claimed.
\end{proof_of}

\subsection{Section \ref{sec:alg}}\label{app:sec:alg}
\begin{proof_of}{Lemma \ref{prop:Meqs}}
Fix \(i\in\{1,\dots,N\}\). We first derive a recurrence for the coefficients
\(m_{i,n,h}^d\). Consider a meander \(\sigma\) counted by \(m_{i,n,h}^d\),
that is,
\[
|\sigma|=n,\qquad \mathrm{ia}(\sigma)=d,\qquad \mathrm{is}(\sigma)=1,\qquad \mathrm{fa}(\sigma)=h,\qquad \mathrm{fs}(\sigma)=i.
\]
For \(n=0\), the only meander of length \(0\) starts in state \(1\), stays at
altitude \(d\), and ends in state \(1\). Hence
\[
m_{i,0,h}^d=[i=1][h=d].
\]
Let now \(n\ge 1\)  and consider a meander \(\sigma\) counted by \(m_{i,n,h}^d\).
Write \(\sigma=\sigma',\eta\), where \(\eta\) is the last transition of \(\sigma\).
Then \(\sigma'\) is a meander of length \(n-1\), ending in some state \(j\) and
some altitude \(h-k\ge 0\), and \(\eta\) is a transition
\[
(p,k,i)\in S_j
\]
of weight \(p\), sending state \(j\) to state \(i\) and changing the altitude by
\(k\). Conversely, every such pair \((\sigma',\eta)\) yields a unique meander
ending in state \(i\) at altitude \(h\). Therefore
\[
m_{i,n,h}^d
=
\sum_{j=1}^N \ \sum_{(p,k,i)\in S_j}
p\,[h-k\ge 0]\; m_{j,n-1,h-k}^d ,
\qquad (n\ge 1,\ h\ge 0).
\]
Combining this with the case \(n=0\), we obtain for all \(n,h\ge 0\):
\begin{equation}\label{eq:coeff-lemma6}
m_{i,n,h}^d
=
[i=1][n=0][h=d]
+
[n\ge 1]\sum_{j=1}^N \ \sum_{(p,k,i)\in S_j}
p\,[h-k\ge 0]\; m_{j,n-1,h-k}^d .
\end{equation}
We now multiply \eqref{eq:coeff-lemma6} by \(z^n u^h\) and sum over all
\(n,h\ge 0\). On the RHS, the first term contributes
\[
\sum_{n,h\ge 0} [i=1][n=0][h=d]\, z^n u^h = [i=1]u^d .
\]
For the second term in \eqref{eq:coeff-lemma6}, sum rearranging yields
\begin{align}\label{eq:meand1}
\sum_{n,h\ge 0}
[n\ge 1]
\sum_{j=1}^N \sum_{(p,k,i)\in S_j}
p\,[h-k\ge 0]\; m_{j,n-1,h-k}^d \, z^n u^h\;=
\\
\hspace*{2cm}\nonumber
z\sum_{j=1}^N \sum_{(p,k,i)\in S_j}
p\sum_{n\ge 1}\sum_{h\ge 0}
[h-k\ge 0]\; m_{j,n-1,h-k}^d \, z^{n-1} u^h .
\end{align}
Now fix \(j\in\{1,...,N\}\) and \((p,k,i)\in S_j\), and set \(m=n-1\), \(r=h-k\). Since
\(n\ge 1\) and \(h-k\ge 0\), this gives \(m\ge 0\) and \(r\ge 0\), while
\(h=r+k\ge 0\) forces \(r\ge -k\). Hence \(r\ge \max(0,-k)\), and therefore, for the two inner sums in \eqref{eq:meand1} we have, interchanging the sum in the second step, and using the definition of $M^d_{j,r}$:
\begin{align*}
\sum_{n\ge 1}\sum_{h\ge 0}
[h-k\ge 0]\; m_{j,n-1,h-k}^d \, z^{n-1} u^h
& =
\sum_{m\ge 0}\sum_{r\ge \max(0,-k)}
m_{j,m,r}^d\, z^m u^{r+k}\\
&=\sum_{r\ge \max(0,-k)} M_{j,r}^d(z)\, u^{r+k}.
\end{align*}
Thus continuing from \eqref{eq:meand1}
\[
\sum_{n,h\ge 0}
[n\ge 1]
\sum_{j=1}^N \sum_{(p,k,i)\in S_j}
p\,[h-k\ge 0]\; m_{j,n-1,h-k}^d \, z^n u^h\,
=\,
z\sum_{j=1}^N \sum_{(p,k,i)\in S_j}
p \sum_{r\ge \max(0,-k)} M_{j,r}^d(z)\, u^{r+k}.
\]
Hence
\begin{equation}\label{eq:splitk}
M_i^d(z,u)
=
[i=1]u^d
+
z\sum_{j=1}^N \sum_{(p,k,i)\in S_j}
p \sum_{r\ge \max(0,-k)} M_{j,r}^d(z)\, u^{r+k}.
\end{equation}
We now split according to the sign of \(k\).
If \(k\ge 0\), then \(\max(0,-k)=0\), so
\[
\sum_{r\ge \max(0,-k)} M_{j,r}^d(z)\, u^{r+k}
=
u^k \sum_{r\ge 0} M_{j,r}^d(z)\,u^r
=
u^k M_j^d(z,u).
\]
If \(k<0\), then \(\max(0,-k)=-k\), so
\[
\sum_{r\ge \max(0,-k)} M_{j,r}^d(z)\, u^{r+k}
=
\sum_{r\ge -k} M_{j,r}^d(z)\, u^{r+k}
=
u^k M_j^d(z,u)-\sum_{r=0}^{-k-1} u^{k+r} M_{j,r}^d(z).
\]
Substituting these two identities into \eqref{eq:splitk}, we obtain
\[
M_i^d(z,u)
\;=\;
[i=1]u^d
+
z\sum_{j=1}^N \sum_{(p,k,i)\in S_j} p\,u^k\, M_j^d(z,u)  \
-\
z\sum_{j=1}^N \sum_{\substack{(p,k,i)\in S_j\\ k<0}}
p\sum_{r=0}^{-k-1} u^{k+r} M_{j,r}^d(z).
\]
By the definition of
$S_{ij}(u) =\sum_{(p,k,i)\in S_j} p\,u^k$,
the middle term on the RHS above is
$
z\sum_{j=1}^N S_{ij}(u)\, M_j^d(z,u)
$,
while the last term is exactly \(B_i(z,u)\). Therefore we get
\[
M_i^d(z,u)
=
[i=1]u^d
+
z\sum_{j=1}^N S_{ij}(u)\, M_j^d(z,u)
-
B_i(z,u),
\]
as claimed.
\ifmai
\textbf{PROOF 2}

Fix \(i\in\{1,\dots,N\}\). We reason by decomposing meanders counted by
\(M_i^d(z,u)\) according to their last step, exactly as in the discussion
preceding the statement.
Recall that
\[
M_i^d(z,u)=\sum_{n,h\ge 0} m_{i,n,h}^d\, z^n u^h,
\]
where \(m_{i,n,h}^d\) is the total weight of meanders \(\sigma\) such that
\[
|\sigma|=n,\qquad ia(\sigma)=d,\qquad is(\sigma)=1,\qquad fa(\sigma)=h,\qquad fs(\sigma)=i.
\]
A meander counted by \(M_i^d(z,u)\) is of exactly one of the following two kinds.
\noindent
\emph{(1) The trivial meander.}
This has length \(0\), starts in state \(1\), ends in state \(1\), and has
final altitude \(d\). Therefore it contributes
\[
[i=1]u^d.
\]
\noindent
\emph{(2) A nonempty meander.}
Let \(\sigma\) be such a meander, and write its last transition as
\[
(p,k,i)\in S_j
\]
for some \(j\in\{1,\dots,N\}\). Then the prefix \(\sigma'\) obtained by removing
the last transition is again a meander, ending in state \(j\), say at altitude
\(h\ge 0\). The last step has weight \(p\), changes the altitude by \(k\), and
brings the final state to \(i\). Thus, formally, appending such a last step
to a meander counted by \(M_j^d(z,u)\) contributes
\[
z\,p\,u^k\,M_j^d(z,u).
\]
Summing over all possible predecessor states \(j\) and all transitions
\((p,k,i)\in S_j\), one is led to the expression
\[
[i=1]u^d+z\sum_{j=1}^N\sum_{(p,k,i)\in S_j} p\,u^k\,M_j^d(z,u).
\]
Since
\[
S_{ij}(u):=\sum_{(p,k,i)\in S_j} p\,u^k,
\]
this can be rewritten as
\[
[i=1]u^d+z\sum_{j=1}^N S_{ij}(u)\,M_j^d(z,u).
\]
However, this expression overcounts: when \(k<0\), the factor \(u^kM_j^d(z,u)\)
also produces terms corresponding to predecessor meanders whose final altitude
\(h\) is too small, namely \(0\le h\le -k-1\). In that case the appended step
sends the altitude to
\[
h+k<0,
\]
so the resulting run is not a meander and must be excluded.
To see this explicitly, write
\[
M_j^d(z,u)=\sum_{h\ge 0} M_{j,h}^d(z)\,u^h.
\]
Then, for a fixed negative transition \((p,k,i)\in S_j\) with \(k<0\),
\[
z\,p\,u^k\,M_j^d(z,u)
=
z\,p\sum_{h\ge 0} M_{j,h}^d(z)\,u^{k+h}.
\]
Among these terms, those with \(0\le h\le -k-1\) have exponent \(k+h<0\),
hence correspond exactly to forbidden extensions, i.e. to appending the step
\((p,k,i)\) to a meander ending at altitude \(h\) and thereby dropping below \(0\).
Their total contribution is
\[
z\,p\sum_{h=0}^{-k-1} u^{k+h} M_{j,h}^d(z).
\]
Therefore, to obtain precisely the generating function of meanders ending in
state \(i\), one must subtract all these forbidden contributions. Summing over
all \(j\) and all negative transitions \((p,k,i)\in S_j\), one subtracts
\[
B_i(z,u):=
z\sum_{j=1}^N
\sum_{(p,k,i)\in S_j:\,k<0}
p\sum_{h=0}^{-k-1} u^{k+h} M_{j,h}^d(z).
\]
Putting everything together yields
\[
M_i^d(z,u)
=
[i=1]u^d
+
z\sum_{j=1}^N S_{ij}(u)\,M_j^d(z,u)
-
B_i(z,u),
\]
which is exactly the claim.

\textbf{PROOF 2}

Fix \(i\in\{1,\dots,N\}\). A meander counted by
\[
M_i^d(z,u)=\sum_{n,h\ge 0} m_{i,n,h}^d\, z^n u^h
\]
is either:

\begin{enumerate}
\item the trivial meander, which exists only for \(i=1\) and contributes \(u^d\);
\item a nonempty meander obtained by appending one transition \((p,k,i)\in S_j\) to a
meander ending in state \(j\), for some \(j\in\{1,\dots,N\}\).
\end{enumerate}

Ignoring for a moment the nonnegativity constraint on the final altitude, this gives
\[
M_i^d(z,u)
=
[i=1]u^d
+
z\sum_{j=1}^N\sum_{(p,k,i)\in S_j} p\,u^k\,M_j^d(z,u)
-
(\text{forbidden terms}).
\]
Since
\[
S_{ij}(u):=\sum_{(p,k,i)\in S_j} p\,u^k,
\]
the second term is
\[
z\sum_{j=1}^N S_{ij}(u)\,M_j^d(z,u).
\]

It remains to subtract the forbidden terms. These occur exactly when \(k<0\) and the
predecessor meander ends at altitude \(h\in\{0,\dots,-k-1\}\), because then the last step
sends the altitude to \(h+k<0\), so the resulting run is not a meander.

Now
\[
M_j^d(z,u)=\sum_{h\ge 0} M_{j,h}^d(z)\,u^h,
\]
hence for a fixed negative transition \((p,k,i)\in S_j\),
\[
z\,p\,u^k\,M_j^d(z,u)
=
z\,p\sum_{h\ge 0} M_{j,h}^d(z)\,u^{k+h}.
\]
The forbidden part is therefore
\[
z\,p\sum_{h=0}^{-k-1} u^{k+h}M_{j,h}^d(z).
\]
Summing over all \(j\) and all negative transitions \((p,k,i)\in S_j\) gives exactly
\[
B_i(z,u):=
z\sum_{j=1}^N
\sum_{(p,k,i)\in S_j:\,k<0}
p\sum_{h=0}^{-k-1} u^{k+h}M_{j,h}^d(z).
\]

Therefore
\[
M_i^d(z,u)
=
[i=1]u^d
+
z\sum_{j=1}^N S_{ij}(u)\,M_j^d(z,u)
-
B_i(z,u),
\]
as claimed.
\fi
\end{proof_of}

\vsp
\vsp
\begin{proof_of}{Proposition \ref{prop:rational}} 
Consider \eqref{eq:system} as a linear system with coefficients the field of algebraic functions  $\A$. Under the given hypothesis on the rank, we may assume\replace{,} possibly eliminating redundant equations\replace{,} that $c=|\widetilde{M}^d_{j,h}(z)|$.
Fix   formal indeterminates $u_1,...,u_c$.
From \eqref{eq:MainEq}, each  $L(z,u_s,\widetilde{M}^d_{j,h}(z))$ is a rational function of $z$ and  of the  $u_s$'s, linear in the $M^d_{j,h}(z)$'s.
Therefore ($L(z,u_s,\widetilde{M}^d_{j,h}(z))=0$, $1\leq s \leq c$) can be transformed into a  linear system of equations in the unknowns $\widetilde{M}^d_{j,h}(z)$ with coefficients in  the  field of  fractions $\reals(z,u_1,...,u_c)$. Let us write this new system in the explicit form $  H\cdot \widetilde{M}^d_{j,h}(z) = a$, for suitable $c\times c$ matrix $H$ and $c\times 1$ vector $a$, with entries in $\reals(z,u_1,...,u_c)$, with $\widetilde{M}^d_{j,h}(z)$ considered as a $c\times 1$ vector. This system has a unique solution $\eta\in \reals(z,u_1,...,u_c)^c$ iff $\det H\neq 0$  in $\reals(z,u_1,...,u_c)$. In that case, $\eta=H^{-1}\cdot a$. There is a unique field homomorphism $\phi: \reals(z,u_1,...,u_c)\to \A$ such that  $\phi(u_s)=u_s(z)$, $1\leq s\leq c$. With an obvious notation, $\phi(H)\cdot \widetilde{M}^d_{j,h}(z) = \phi(a)$ is the original system, and $\phi(\eta)$ is its unique solution. Note that, under the   hypothesis that  the original system \eqref{eq:system} has full rank $c$, we must have   $\det H\neq  0$ in $\reals(z,u_1,...,u_c)$: otherwise    we would   get $\phi(\det H)=\det(\phi(H))= 0$ in $\A$, which is a contradiction. The components of $\eta$, which  are     rational expressions,  are a symbolic representation of the unique solution of \eqref{eq:system}, say $\widetilde{M}^d_{j,h}$, in the sense that $\phi(\eta)=\widetilde{M}^d_{j,h}(z)$. From \eqref{eq:Meq} we have
$M^d(z,u) = (I-zA(u))^{-1}(v^d(u)-B(z,u))$: so $M^d(z,u)$ can be represented symbolically   (i.e. as a rational expression in the indeterminates $z,u,u_s$)  by replacing each  ${M}^d_{j,h}$ with the corresponding component of $\eta$:  $M^d(z,u)[\eta/\widetilde{M}^d_{j,h}]$.  Similarly,  $M^d_j(z)=M^d_j(z,1)$ is represented symbolically by $M^d_j(z,1)[\eta/\widetilde{M}^d_{j,h}]$    (as a rational expression in $z,u_s$). Finally, the symbolic rational expression of $Q$ is obtained from the
expression in Theorem \ref{th:char2}.
\end{proof_of}

\begin{lemma}\label{lemma:det} Consider a field that extends $\reals$.
Let $\gamma_1,...,\gamma_k$ be nonzero elements of the field,  and $p_1,...,p_k\in \reals$.  Consider the $k\times k$ matrix $H$ whose $s$-th row  is
$$
r_s\; =\; \Big[\; \sum_{h=1}^k p_h \gamma_s^{-h},\;
\sum_{h=2}^k p_h \gamma_s^{-h+1},\;
\ldots,\;
p_k \gamma_s^{-1} \;\Big]\ \ \ \text{($1\leq s\leq k$).}
$$
Then
\begin{equation}\label{eq:detH}
\det H
= \pm \,
p_k^{\,k}\,
\Bigl(\prod_{s=1}^k \gamma_s^{-k}\Bigr)
\prod_{1\le i<j\le k}(\gamma_i - \gamma_j).
\end{equation}
\end{lemma}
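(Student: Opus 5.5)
The plan is to rewrite each row $r_s$ as the product of a fixed triangular matrix of the $p_h$'s with a Vandermonde-type row in $x_s:=\gamma_s^{-1}$. Set $x_s:=\gamma_s^{-1}$, which is nonzero. The $\ell$-th entry of $r_s$ (for $\ell=1,\dots,k$) is
$$\sum_{h=\ell}^k p_h x_s^{h-\ell+1}=\sum_{m=1}^{k} p_{m+\ell-1}\,x_s^{m},$$
with the convention $p_t=0$ for $t>k$. Hence $H=V\cdot T$, where $V$ is the $k\times k$ matrix with entries $V[s,m]=x_s^m$ and $T$ is the $k\times k$ matrix with $T[m,\ell]=p_{m+\ell-1}$ (zero when $m+\ell-1>k$). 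By multiplicativity of the determinant, $\det H=\det V\cdot\det T$.

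For $T$: its entries vanish strictly below the anti-diagonal $m+\ell=k+1$, and on the anti-diagonal they all equal $p_k$. Reversing the order of its columns makes it upper triangular with diagonal entries $p_k$. Hence $\det T=\pm p_k^{\,k}$, where the sign is that of the column-reversal permutation, $(-1)^{k(k-1)/2}$.

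For $V$: factor $x_s$ out of row $s$. This leaves the standard Vandermonde matrix $[x_s^{m-1}]$, so
$$\det V=\Bigl(\prod_s x_s\Bigr)\prod_{i<j}(x_j-x_i).$$
Next convert back to the $\gamma$'s using
$$x_j-x_i=\gamma_j^{-1}-\gamma_i^{-1}=\frac{\gamma_i-\gamma_j}{\gamma_i\gamma_j}.$$
Each index $s$ appears in exactly $k-1$ of the pairs $i<j$, so
$$\prod_{i<j}(x_j-x_i)=\Bigl(\prod_s\gamma_s^{-(k-1)}\Bigr)\prod_{i<j}(\gamma_i-\gamma_j).$$
Combining this with the prefactor $\prod_s x_s=\prod_s\gamma_s^{-1}$ gives a total factor $\prod_s\gamma_s^{-k}$. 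Putting the pieces together yields \eqref{eq:detH}, with the sign absorbed into $\pm$.

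The only delicate step is the index bookkeeping in the factorization $H=VT$: one has to check that the exponent shift $h-\ell+1$ maps exactly onto the powers $1,\dots,k$, and that the zero pattern of $T$ is anti-triangular. The rest is routine, and everything holds in any field extending $\reals$, since only polynomial identities and the nonvanishing of the $\gamma_s$ are used.
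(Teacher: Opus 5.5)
Your proposal is correct and is essentially the paper's proof. Both use the factorization $H=V\cdot T$, where $V[s,m]=\gamma_s^{-m}$ (the paper's $A$) and $T[m,\ell]=p_{m+\ell-1}$ is the Hankel matrix, evaluate $\det V$ as $\prod_s\gamma_s^{-1}$ times a Vandermonde determinant, and get $\det T=\pm p_k^{\,k}$ from the anti-triangular shape; your column reversal is the paper's permutation matrix $E$, and your sign $(-1)^{k(k-1)/2}$ matches the paper's.
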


\begin{proof}
Set $\alpha_s:=\gamma_s^{-1}$. Define the matrices
\[
A =
\begin{bmatrix}
\alpha_1^1 & \alpha_1^2 & \cdots & \alpha_1^k\\[2pt]
\alpha_2^1 & \alpha_2^2 & \cdots & \alpha_2^k\\[2pt]
\vdots&\vdots& &\vdots\\[2pt]
\alpha_k^1 & \alpha_k^2 & \cdots & \alpha_k^k
\end{bmatrix}
\ \ \ \text{ and } \ \ \
T =
\begin{bmatrix}
p_1     & p_2     & p_3     & \cdots & p_{k-1} & p_k \\[2pt]
p_2     & p_3     & p_4     & \cdots & p_k     & 0   \\[2pt]
\vdots  & \vdots  & \vdots  & \ddots & \vdots  & \vdots \\[2pt]
p_{k-1} & p_k     & 0       & \cdots & 0       & 0   \\[2pt]
p_k     & 0       & 0       & \cdots & 0       & 0
\end{bmatrix}.
\]
Then
\[
H = A\,\cdot\,T
\]
which implies
\begin{align}\label{eq:detH1}
\det H = \det A \cdot \det T \,.
\end{align}
The matrix \(A\) can be written as
\[
A = \operatorname{diag}(\alpha_1,\dots,\alpha_k)\cdot
\begin{bmatrix}
1 & \alpha_1 & \cdots & \alpha_1^{k-1}\\
1 & \alpha_2 & \cdots & \alpha_2^{k-1}\\
\vdots & \vdots & & \vdots\\
1 & \alpha_k & \cdots & \alpha_k^{k-1}
\end{bmatrix}
= \operatorname{diag}(\alpha_1,\dots,\alpha_k)\cdot\,V
\]
where \(V\) is the usual Vandermonde matrix. It is known that $\det V=\prod_{1\le i<j\le k} (\alpha_j - \alpha_i)$,
hence
\begin{align}\label{eq:detH2}
\det A
= \Bigl(\prod_{s=1}^k \alpha_s\Bigr)
\prod_{1\le i<j\le k} (\alpha_j - \alpha_i).
\end{align}
Replace $\alpha_s=\gamma_s^{-1}$ and note that
$
\alpha_j-\alpha_i = \gamma_j^{-1}-\gamma_i^{-1}=\frac{\gamma_i-\gamma_j}{\gamma_i \gamma_j}
$,
so
$$
\prod_{1\le i<j\le k}(\alpha_j-\alpha_i)
= \frac{\prod_{1\le i<j\le k}(\gamma_i-\gamma_j)}{\prod_{1\le i<j\le k}\gamma_i \gamma_j}
= \frac{\prod_{1\le i<j\le k}(\gamma_i-\gamma_j)}{\prod_{i=1}^k \gamma_i^{\,k-1}}.
$$
Replacing this in \eqref{eq:detH2} and recalling that $\alpha_s=\gamma_s^{-1}$
\[
\det A
= \Bigl(\prod_{s=1}^k \gamma_s^{-k}\Bigr)
\prod_{1\le i<j\le k} (\gamma_i - \gamma_j).
\]
On the other hand, $T$ is a upper anti-triangular matrix, whose determinant is the product of the elements on its main anti-diagonal, up to a sign: $\det T  =  \pm\,p_k^{\,k}$.
This can be shown by e.g. noting that, for a suitable permutation matrix $E$,  $T\cdot E$ is a triangular matrix with $p_k$ on the main diagonal, hence  $p_k^k=\det (T\cdot E) = \det T\cdot \det E$; on the other hand, for any permutation matrix $E$, $\det E=\pm 1$, hence the claimed statement on $\det T$ (an easy induction argument yields the more precise $\det T  = (-1)^{\frac{k(k-1)}{2}}\,p_k^{\,k}$). From \eqref{eq:detH1} and \eqref{eq:detH2} we therefore obtain the wanted \eqref{eq:detH}.
\end{proof}

\begin{lemma}\label{lemma:onestatec}
If $P$ is a nontrivial 1-state GCP program, Algorithm \ref{alg:ana} does not raise \textbf{Fail}.
\end{lemma}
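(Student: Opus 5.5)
For a 1-state program ($N=1$), I would first make every object in system \eqref{eq:system} explicit and then show it is a Vandermonde-type system whose determinant is visibly nonzero, using Lemma~\ref{lemma:det}.

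\textbf{Setting up.} With $N=1$ we have $A(u)=S_{11}(u)=\sum_{(p,k,1)\in\J_1}p\,u^k$. Let $e=e_1\geq 1$ be the largest pop size, which exists by nontriviality, and write $p_h$ for the probability of the relative update $-h$, for $1\leq h\leq e$; thus $p_e>0$. Then $K(z,u)=u^e(1-zS_{11}(u))=u^e-z\,u^eS_{11}(u)$, and the unknowns are $M^d_{1,0},\dots,M^d_{1,e-1}$, so $|\widetilde M^d_{j,h}|=e$. The adjoint of a $1\times 1$ matrix is $1$, so \eqref{eq:MainEq} reads $L=u^d-B_1(z,u)$. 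Regrouping $B_1$ by the unknowns gives
\[
B_1(z,u)=z\sum_{r=0}^{e-1}M^d_{1,r}(z)\sum_{h=r+1}^{e}p_h\,u^{r-h}.
\]
Hence each small branch $u_s$ yields an equation whose coefficient row is $z\bigl[\sum_{h=1}^e p_h u_s^{-h},\ \sum_{h=2}^e p_hu_s^{-h+1},\dots,p_eu_s^{-1}\bigr]$. Up to the nonzero scalar factor $z$, this is exactly the row $r_s$ of Lemma~\ref{lemma:det} with $\gamma_s=u_s(z)$ and $k=e$, in the field $\A$ (or in Puiseux series) extending $\reals$.

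\textbf{Counting small branches.} I expect this to be the main obstacle: we need $c=e$ distinct nonzero small branches. I would use the Newton polygon at $z=0$. We have $K(0,u)=u^e$, while the lowest power of $u$ in $u^eS_{11}(u)$ is $u^0$, with coefficient $p_e\neq 0$. So $K(z,u)=u^e-p_ez+\dots$, where every other monomial of $z\,u^eS_{11}(u)$ is $z\,u^{j}$ with $j\geq 1$. The lower Newton polygon therefore has a single segment from $(u^e,z^0)$ to $(u^0,z^1)$, of slope $1/e$. This gives exactly $e$ branches $u_s(z)\sim \omega^s(p_ez)^{1/e}$, with $\omega$ a primitive $e$-th root of unity. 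They are small, they are nonzero (as series), and they are pairwise distinct because their leading coefficients differ. All other branches have $n_0\leq 0$: since $u^e$ has $z$-degree $0$, the remaining edges have nonpositive slope, which corresponds to the positive pushes; if there are no such pushes, then $K$ has degree $e$ in $u$ and there are no other branches at all. One must also check that $K$ is not the zero polynomial and that $u\nmid K$; both follow from $p_e\neq0$.

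\textbf{Conclusion.} By Lemma~\ref{lemma:det}, the $e\times e$ coefficient matrix has determinant
\[
\pm z^e p_e^e\prod_s u_s^{-e}\prod_{i<j}(u_i-u_j),
\]
which is nonzero in $\A$ because $p_e>0$, the $u_s$ are nonzero, and the $u_s$ are distinct. So the rank equals $e=|\widetilde M^d_{j,h}(z)|$, step~3 of Algorithm~\ref{alg:ana} does not fail, and no other step can fail. One last point: Lemma~\ref{lemma:det} states rows with indices $h=1..k$, while our rows come from $r=0..e-1$. I would double-check this index alignment, including the absolute-update term $b_1$, which does not enter $B_1$ and so does not affect the matrix.
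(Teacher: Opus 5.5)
Your proposal is correct and follows the paper's proof. You rewrite \eqref{eq:system} as $z\cdot H\cdot\widetilde{M}^d_{1,h}(z)=(u_s(z)^d)_s$, with $H$ exactly of the form in Lemma~\ref{lemma:det}, and conclude from the Vandermonde-type determinant; the index alignment you worried about is fine, since your $(r+1)$-th entry $\sum_{h=r+1}^{e}p_h u^{r-h}$ is precisely the lemma's entry, and $b_1$ indeed never enters $B_1$. The only difference is that you derive the existence of exactly $e$ distinct nonzero small branches directly from the Newton polygon (the edge from $(e,0)$ to $(0,1)$, with characteristic equation $C^e=p_e$), whereas the paper cites \cite[Prop.4.4]{AsinowskiBacherBanderierGittenberger20} for this fact, so your version is more self-contained.
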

\begin{proof}
Recall that $e$ is the magnitude (absolute value) of the minimal negative update, which we assume $>0$.
Moreover, we have $N=1$. Hence $\det(I-zA(u))=1-zS_{11}(u)$ and $S_{11}(u)=p_eu^{-e}+p_{e-1}u^{-e+1}+\cdots+p_{e-\ell} u^{-e+\ell}$, with $e>0$,  $\ell\geq 0$, $p_i\in[0,1]$, $p_e>0$ and $\sum_j p_j\leq 1$. Then $K(z,u)=u^e+zP(u)$, with $P(u)=p_e+p_{e-1}u+\cdots+p_{e-\ell} u^\ell$. The polynomial $K(z,u)$ has exactly $c=e$ \emph{distinct} small branches: for a proof, see \cite[Prop.4.4]{AsinowskiBacherBanderierGittenberger20}; note that no branch  of $K(z,u)$ can be identically zero, as $K(z,0)=p_e>0$.

Now consider \eqref{eq:Bi}. Consider the magnitudes of the possible \emph{negative} updates, starting from the largest possible: $e,e-1,...,1$; and the corresponding probabilities: $p_e,p_{e-1},...,p_1$.   Rearranging the summands in \eqref{eq:Bi}, we can write $B(z,u):=B_1(z,u)$ as follows, where on the RHS the first row gathers the balancing terms for a relative update $-e$, the second row the balancing terms for a relative update $-e+1$, and so on:
{
$$
\begin{array}{rcllllllllll}
B(z,&u)\\
=&z \Big(&\ \ \ \  \ p_e u^{\text{-}e}M^d_{1,0}(z)&+& \ \ p_e u^{\text{-}e+1}M^d_{1,1}(z)&+&\cdots &+&p_e u^{\text{-}2}M^d_{1,e\text{-}2}(z) &+&p_e u^{\text{-}1}M^d_{1,e\text{-}1}(z)+\\
&     & p_{e\text{-}1} u^{\text{-}e+1}M^d_{1,0}(z)&+& p_{e\text{-}1} u^{\text{-}e+2}M^d_{1,1}(z)&+&\cdots &+&p_{e\text{-}1} u^{\text{-}1}M^d_{1,e\text{-}2}(z)&+\\[2pt]
&&&&&{\;\text{$\vdots$}}\\[2pt]
&      & \ \ \ \ \ p_2 u^{\text{-}2}M^d_{1,0}(z)& + &\ \ \ \ \ p_2 u^{\text{-}1}M^d_{1,1}(z)&+\\[2pt]
&      & \ \ \ \ \ p_1 u^{\text{-}1}M^d_{1,0}(z)\;\Big)\\[4pt]
=& z\,.\, \Big[& \sum_{h=1}^e p_h u^{-h} &, & \sum_{h=2}^e p_h u^{-h+1} &, &  && {\small\ldots} &,& p_e u^{-1} \Big]&\hspace*{-1.3cm}\cdot \, \widetilde{M}^d_{1,h}\\[-3pt]
&&\multicolumn{10}{c}{\hspace*{-1.6cm}\underbrace{\hspace*{12.9cm}}_{\textstyle\ \ \ \ \  =:r(u)}}
\end{array}
$$}\noindent
where $\widetilde{M}^d_{1,h}(z)=(M^d_{1,0}(z),M^d_{1,1}(z),...,M^d_{1,e-1}(z))^T$, a $e\times 1$  column vector. Now, considering that $w= \mathrm{adj}(I-zA(u))=1$ in this case (as $N=1$), (\ref{eq:MainEq},\ref{eq:system})
give  $L(z,u,\widetilde{M}^d_{1,h}(z))=u^d-B(z,u)=u^d-z\cdot r(u)\cdot \widetilde{M}^d_{1,h}(z)$. Consequently, system \eqref{eq:system} can be written as
\begin{align*}
\mathrm{diag}(z,...,z)\cdot H\cdot \widetilde{M}^d_{1,h}(z) &= v^d
\end{align*}
where $H$ is the $e\times  e$ matrix whose $s$-th row is $r(u_s(z))$ ($1\leq s \leq e$), and $v^d=[u^d_1(z),...,u^d_e(z)]^T$. Note that $H$ is precisely of the form  described in the statement of Lemma \ref{lemma:det}, with $k=e$ and  $\gamma_s=u_s(z)$. Then \eqref{eq:detH}  implies that
$\det H
= \pm \, p_e^{\,e}\,
\Bigl(\prod_{s=1}^e u_s(z)^{-e}\Bigr)\cdot
\prod_{1\le i<j\le e}(u_i(z) - u_j(z))\neq 0$ as an element of $\A$, \replace{since} $p_e>0$ and the $u_s(z)$'s are all distinct and nonzero. This in turn implies that $\det(\mathrm{diag}(z,...,z)\cdot H)=z^e\cdot \det H\neq 0 $. Hence  system  \eqref{eq:system} has rank $e=c=|\widetilde{M}^d_{1,h}(z)|$.
\end{proof}
\vsp

\begin{proof_of}{Proposition \ref{prop:compl}} 
Suppose the algorithm does not raise \textbf{Fail} at step 2. Then Proposition \ref{prop:rational} and Theorem \ref{th:char2} imply immediately the thesis for $Q$, while $K$ is by definition the kernel polynomial. For nontrivial 1-state programs the result follows from Lemma \ref{lemma:onestatec}.
\end{proof_of}

\subsection{Section \ref{sec:algan}}\label{app:sec:algan}
We recall the \emph{identity theorem} for analytic functions: if $f(z)$ and $g(z)$ are defined and analytic in a connected   open subset $U\subseteq \C$, and coincide in a subset $W\subseteq U$ that has an accumulation point (e.g. $W$ is in turn nonempty open), then $f=g$ on $U$;   see e.g. \cite[Ch.10,p.228]{RudinRCA2}, or any introductory text on complex analysis.

We also recall the \emph{Newton-Puiseux expansions} of an algebraic function at a point $z_0$  \cite[Th.VII.7]{Flajo}. Let $u(z)$ be a nonzero branch of an algebraic function defined by a polynomial $K(z, u)$ and $z_0\in \C$. Then  $u(z)$ admits a fractional series expansion (Puiseux expansion)
that is locally convergent near $z_0$, and is of the form
\begin{align}
u(z)&=\sum_{n\geq n_0}c_k (z-z_0)^{n/\kappa}\label{eq:Puiseux}
\end{align}
for some integer $\kappa>0$ \emph{(ramification index)} and  $n_0\in \Z$ \emph{(initial coefficient)} s.t. $c_{n_0}\neq 0$,   and a fixed determination of the $\kappa$-th complex root function  $z\mapsto (z-z_0)^{1/\kappa}$. More precisely, the \eqref{eq:Puiseux}  is valid  and $u(z)$ is   analytic
for $z\in B\setminus \rho$ (a \emph{slit   neighborhood} of $z_0$), where: $B$ is a neighborhood of $z_0$,  and    $\rho=\{\, z_0 + \lambda \mathrm{e}^{i\theta_0} : \lambda \ge 0  \}$ ($\theta_0\in [0,2\pi)$) is a \emph{ray} emanating from $z_0$;  the direction $\theta_0$ of the ray can be fixed arbitrarily.  Small branches are those where $n_0>0$. The Newton's polygon method \cite[Ch.VII.7]{Flajo} can be used to compute the    Puiseux series of all branches, given a nonzero $K(z,u)$ and $z_0$, truncated at an arbitrary order.
\vsp

\begin{proof_of}{Theorem   \ref{th:singAng}}
Inside the disk $D_R:=\{|z|<R\}$, choose a point $z_0\in D_R$, $z_0\neq \zeta$, in a neighborhood of 0  where all the $u_i(z)$'s are defined and analytic,   and a path   $\gamma:[0,1)\to D_R\setminus(\Xi[K]\cup\{\zeta\})$ connecting such  $z_0$ to $\zeta$  (NB: the theorem's hypotheses grant the existence of such a $\gamma$; hence $\Xi[K]\neq \C$, which implies $\Xi[K]$   \emph{finite}).  Moreover, we can choose $\gamma$ so as to avoid points where the denominator of $Q(z,u_1(z),...,u_c(z))$ vanishes; note that this denominator cannot be identically 0 because $f(z)= Q(z,u_1(z),...,u_c(z))$ is well defined  near $z=0$.
\begin{enumerate}
\item
Because the path $\gamma$ avoids the   singularities of $u_i(z)$,  for each $i\in\{1,...,c\}$, $u_i(z)$  can be analytically continued along $\gamma$.
In other words,  there is a connected open set $W$ containing  $\gamma([0,1))$, and contained in $D_R\setminus (\Xi[K]\cup\{\zeta\})$, where each $u_i(z)$ admits a single-valued analytic continuation: see e.g. \cite[Lemma VII.4]{Flajo}. Again, we can choose $W$ so as to avoid points where the denominator of $Q(z,u_1(z),...,u_c(z))$ vanishes.
Let us still call $u_i(z)$ these analytical continuations defined on $W$. 

\item Both $f(z)$ and $Q(z,u_1(z),...,u_c(z))$ are defined and analytic on $W\supseteq \gamma([0,1))$.
Moreover, by assumption they coincide in an open subset   near $z=0$, hence on some initial arc of $\gamma$ contained in $W$. By the identity theorem they coincide on the whole $W$, in particular
\begin{align}\label{eq:fQ}
f(z)&=Q(z,u_1(z),...,u_c(z)) \ \ \text{for $z\in  \gamma([0,1))$.}
\end{align}
\item
Consider now a slit neighborhood $U$ of $\zeta$,  where the slit is taken  along a ray not intersecting $\gamma$. Since $\zeta$ is in the closure of $\gamma([0,1))$ there is a nonempty arc $\gamma((1-\epsilon,1))\subseteq U$.  Consider now the   branches $\tilde u_i(z)$ of $K(z,u)$ at $z=\zeta$ defined on $U$. Then for each $i=1,...,c$   there must exist  a branch $\tilde u_i(z)$   s.t. $u_i(z)= \tilde u_i(z)$  on the  arc $\gamma((1-\epsilon,1))$: this     $\tilde u_i(z)$ is the branch at $z=\zeta$ analytically connected to $u_i(z)$ along $\gamma$.
By virtue of \eqref{eq:fQ}
\begin{align}\label{eq:fQ2}
f(z)&=Q(z,\tilde u_1(z),...,\tilde u_c(z)) \ \ \text{for $z\in \gamma((1-\epsilon,1))$.}
\end{align}
For each $i=1,...,c$, let $S_i(z)$ be the Puiseux series in $(z-\zeta)$ of $\tilde u_i(z)$  valid in $U$, i.e. $\tilde u_i(z)=S_i(z)$ as functions  for $z\in U$. We can now consider the Puiseux series $S_Q(z)$   defined in the statement, as an element of the field of Puiseux series in $(z-\zeta)$ built from $Q$ and the $S_i(z)$'s. Note that   \eqref{eq:fQ2} implies that $S_Q(z)$ is well defined as an element of this field, i.e.  its denominator is not identically 0.
\item
Let $H(z,u)$ be the nonzero polynomial of which $f(z)$ is a branch at $z=0$. Consider the    Puiseux series  $\tilde S(z)$ of $H(z,u)$ at   $z=\zeta$ which, as  a branch defined in a slit neighborhood of $\zeta$ -- same slit as $U$'s above -- is analytically connected to  $f(z)$, e.g. along $\gamma$. We can take this slit neighborhood to be $U$ without loss of generality. For sufficiently small $\epsilon>0$, we have   $\gamma((1-\epsilon,1))\subseteq U$ and from \eqref{eq:fQ2}: $S_Q(z)=f(z)=\tilde f (z)=\tilde S(z)$ for all $z\in \gamma((1-\epsilon,1))$. Again by the identity theorem,  $S_Q(z)$ and $\tilde S(z)$ coincide on $U$, so $S_Q(z)$  is the Puiseux series of $f(z)$ at $z=\zeta$, i.e. by definition the series coinciding with $f(z)$ on a connected component of  $D_R\cap U$. Now
\begin{enumerate}
\item If $\tilde S(z)=S_Q(z)$ is analytic at $\zeta$, then $f(z)$ can be extended analytically to $\zeta$, so $\zeta$ is not  a singularity of $f(z)$; on the other hand, if   $\tilde S(z)=S_Q(z)$ is \emph{not} analytic at $\zeta$, then $\zeta$ is a singularity of $f(z)$. From this part  (1)  follows.
\item Part 2 follows from part 1.
\end{enumerate}
\end{enumerate}
\ifmai
\textbf{PROOF 3}

\textbf{Step 1: Analytic continuation of each $u_i$ along $\gamma$ arrives at $S_i$.}

Fix $i\in\{1,\ldots,c\}$. The branch $u_i(z)$ is a Puiseux branch of $K$ at $z=0$, defined and holomorphic on a slit neighbourhood $U_i$ of $0$. The path $\gamma:[0,1)\to\{|z|<R\}\setminus\Xi[K]$ starts at a point $z_0\in U_i$ near $0$, stays in $\{|z|<R\}$, and avoids the branch locus $\Xi[K]$. Since $\gamma$ avoids $\Xi[K]$, the implicit function theorem applies at every $\gamma(t)$ for $t\in(0,1)$, and $u_i$ continues analytically along $\gamma$ without obstruction, yielding a holomorphic function $\tilde{u}_i$ defined on a neighbourhood of each $\gamma(t)$.

As $t\to 1^-$, the continued values $\tilde{u}_i(\gamma(t))$ converge, since $\gamma(t)\to\zeta$ and $K(\zeta,\cdot)=0$ has finitely many roots. Let $S_i(z)$ be the Puiseux branch of $K$ at $\zeta$ defined on a slit neighbourhood $V_i$ of $\zeta$, with the slit chosen to avoid $\gamma$. Both $S_i$ and $\tilde{u}_i$ are holomorphic on the connected open set $V_i\cap\{|z|<R\}$, they agree on $\gamma((1-\epsilon,1))$ for some $\epsilon>0$ (by the definition of analytically connected branch applied with $\tilde{u}=S_i$), and $\gamma((1-\epsilon,1))$ accumulates inside $V_i\cap\{|z|<R\}$. By the \textbf{identity theorem}, $\tilde{u}_i\equiv S_i$ on $V_i\cap\{|z|<R\}$. In particular,
$$
u_i(\gamma(t)) = S_i(\gamma(t)) \qquad \text{for all } t\in(1-\epsilon,1),
$$
which is precisely the condition in Definition~\ref{def:connected}.

\textbf{Step 2: $f$ and $S_Q$ agree on the tail of $\gamma$.}

Since $f=Q(z,u_1,\ldots,u_c)$ near $z=0$ and $f$ is analytic and single-valued on $\{|z|<R\}$, we have for all $t$ such that $\gamma(t)$ is near $0$:
$$
f(\gamma(t)) = Q\!\left(\gamma(t),\,u_1(\gamma(t)),\,\ldots,\,u_c(\gamma(t))\right).
$$
Both sides are analytic along $\gamma$ and agree near $t=0$; by the monodromy theorem (analytic continuation along a fixed path is unique) they agree for all $t\in[0,1)$. By Step~1, $u_i(\gamma(t))=S_i(\gamma(t))$ for $t\in(1-\epsilon,1)$, hence
$$
f(\gamma(t)) = Q\!\left(\gamma(t),\,S_1(\gamma(t)),\,\ldots,\,S_c(\gamma(t))\right) = S_Q(\gamma(t))
\qquad \text{for all } t\in(1-\epsilon,1).
$$

\textbf{Step 3: $f\equiv S_Q$ on a slit neighbourhood of $\zeta$.}

Set $V:=\bigl(\bigcap_i V_i\bigr)\cap\{|z|<R\}$, a connected open set. Both $f$ and $S_Q$ are holomorphic on $V$ (away from the isolated poles of $Q$). By Step~2 they agree on $\gamma((1-\epsilon,1))\subset V$, which has an accumulation point inside $V$. By the \textbf{identity theorem}, $f\equiv S_Q$ on $V$. This establishes the main claim.

\textbf{Consequence~1.} By Step~3, $f$ and $S_Q$ represent the same holomorphic function on the slit neighbourhood $V$. The Puiseux series $S_Q=\sum_{n\geq n_0}c_n(z-\zeta)^{n/\kappa}$ extends holomorphically across $\zeta$ if and only if all exponents $n/\kappa$ are non-negative integers, i.e., $\kappa=1$ and $n_0\geq 0$. Otherwise $\zeta$ is a singularity: $n_0<0$ gives a pole, $\kappa>1$ gives a branch point.

\textbf{Consequence~2.} Since $f=S_Q$ on $V$, the limit of $f(z)$ as $z\to\zeta$ along $\gamma$ equals $\lim_{t\to 1^-}S_Q(\gamma(t))$. The leading term $c_{n_0}(z-\zeta)^{n_0/\kappa}$ determines the limit: if $n_0<0$ then $(z-\zeta)^{n_0/\kappa}\to\infty$ and $L=\infty$; if $n_0=0$ then $L=c_0$; if $n_0>0$ then $(z-\zeta)^{n_0/\kappa}\to 0$ and $L=0$.
\fi
\end{proof_of}


\vsp
\begin{proof_of}{Theorem \ref{th:radius_corr}}
Let $f(z):=\Delta^d(z)$.
\begin{enumerate}
\item \emph{(Correctness)}. First,  we note that the dominant (= of modulus $R$) singularities of $f$, assuming $f$ has any,  must be contained in $\Sigma$ as defined in step 2.
Indeed, by Theorem~\ref{th:singAng}(1), every dominant singularity of \(f\) must be a point where
$
S_Q(z)=Q\bigl(z,S_1(z),\dots,S_c(z)\bigr)
$
fails to be analytic. Since $Q$ is rational,  the set of all such points is  contained in
$
\Sigma=\Xi[K]\cup\Lambda[K,D]
$, where $\Xi[K]$ accounts for candidate singularities of the individual branches $S_i(z)$, and $\Lambda[K,D]$ for candidate zeros of the denominator of $Q$ after substitution.  Therefore no dominant singularity of \(f\) can lie outside \(\Sigma\).

Assume first that \(R<+\infty\). Since \(f\) is a pgf, its coefficients are
nonnegative; hence, by Lemma~\ref{lemma:R}(2b) (Pringsheim theorem), the positive real
number \(R\) is itself a singularity of \(f\). Moreover, as a (sub-)pgf   $f(r)$ converges for every
real $0\leq   r \leq 1$, so we must have $R\geq 1$.  Therefore the subset of \(\Sigma\)
containing the dominant singularities must also contain  a real element \(\ge 1\),
and so is not
discarded at line~2. Let \(\Sigma_j\) be  subset containing elements of modulus $R$. By Lemma~\ref{lemma:R}(1), every dominant singularity has modulus \(R\), hence
all dominant singularities lie in \(\Sigma_j\).
Consider now any \(\zeta\in \Sigma_i\) with \(i<j\). Then \(|\zeta|<R\) and $\zeta$ is not a singularity
by Lemma~\ref{lemma:R}(1). The path used at
line~7 is contained in \(\{z:|z|<|\zeta|\}\subseteq \{z:|z|<R\}\), so
Theorem~\ref{th:singAng} applies with \(f(z)=\Delta^d(z)\). Thus the series \(S_\zeta(z)\)
computed at line~8 is the (truncated) Puiseux series of \(f\) at \(\zeta\). Since
\(\zeta\) is not a singularity, by
Theorem~\ref{th:singAng}(1)  the series \(S_\zeta(z)\) has neither a negative
initial coefficient $n_0$ nor a   ramification index $\kappa>1$, so line~9 does not add \(\zeta\) to \(S\).
Now let \(\zeta\in \Sigma_j\). Since  \(|\zeta|= R\),
Theorem~\ref{th:singAng} applies, and line~8 computes the (truncated) Puiseux series of \(f\) at
\(\zeta\). Then line~9 keeps \(\zeta\) if and only if \(n_0<0\) or \(\kappa>1\), which by
Theorem~\ref{th:singAng}(1) is equivalent to \(\zeta\) being a singularity of \(f\). Since
all singularities of modulus \(R\) are dominant by Lemma~\ref{lemma:R}(1), the set $
S=\{(\zeta_1,S_{\zeta_1}(z)),\dots,(\zeta_k,S_{\zeta_k}(z))\}\neq \emptyset,
$ returned
by the algorithm is exactly the set of dominant singularities of \(f\) paired with
the corresponding truncated Puiseux series \(S_{\zeta_i}(z)\).
%
Moreover, by Lemma~\ref{lemma:R}(1),
$
R=|\zeta_i|$ for every  $i$.

If instead $R=+\infty$, then   no     subset \(\Sigma_i\) contains
a singularity. For $i=1,...,t$ and $\zeta\in \Sigma_i$, we can then repeat the above argument and show that   line~9 does not add  \(\zeta\) to \(S\). Therefore the algorithm runs through all $\Sigma_i$'s and eventually returns $S\es=\es$, correctly so.

\item \emph{(1-state completeness).}
Let $Q=T/D$ be the rational expression  returned by Algorithm  \ref{alg:ana}. For a  nontrivial 1-state program, $D=1$. Then in step 1, $\Lambda[K,D]=\es$, as a consequence $J=\langle 1\rangle\neq \langle 0\rangle$, and \textbf{Fail} is not raised.  Note that $K(z,u)$ is assumed to be square-free.
\end{enumerate}
\end{proof_of}

\section{Additional details}\label{app:details}

\subsection{Section \ref{sec:alg}: running example}
We illustrate Algorithm \ref{alg:ana} on our running example.
First, we build system \eqref{eq:Meq0}:
{
$$\left\{
\begin{array}{lcll}
M^d_{{1}} \left( z,u \right)&=&1+z \left( (1/2)\,{\frac {M_{
{1}}^d \left( z,u \right) }{u}}+(2/3)\,uM_{{2}}^d \left( z,u \right)
\right) -z{(1/2)\frac {   M_{1,0}^d\left( z \right)}{
u}}\\
M^d_{{2}} \left( z,u \right)&=&z \left( (1/2)\,uM_{{1}}^d \left( z,u
\right) +(1/6)\,{\frac {M_{{2}}^d \left( z,u \right) }{{u}^{2}}} \right)
-  z\left( (1/6)\,{\frac {  M_{{2,0}}^d\left( z \right) }{{u
}^{2}}}+(1/6)\,{\frac {  M_{{2,1}}^d \left( z \right) }{u}}
\right)\,.
\end{array}\right.
$$
}\noindent
In the matrix form { $M^d(z,u)=v^d(u)+zA(u)M^d\replace{(z,u)}-B(z,u)$}, we have { $v^d(u)=[1, 0]^T$},
{ $A(u)=\left[\begin{smallmatrix}1/(2u)\, &\,  (2/3)u\\ (1/2)u\, &\, 1/(6u^2)  \end{smallmatrix}\right]$},
and {$B(z,u)=z\cdot[(1/2)M^d_{1,0}(z)/u  ,\, (1/6)(M^d_{2,0}(z)/\replace{u^2}+ M^d_{2,1}(z)/\replace{u})]^T$}.
We then compute the kernel polynomial $K(z,u)$ (Line \ref{line:l1}), and determine its small branches (Line \ref{line:l2}). In detail,  $\det (I-zA(u))=  1-(1/2) z/u-(1/6) z/u^2+(1/12) z^2/u^3-(1/3) u^2 z^2$, hence $e=3$ and  $K(z,u)=-4 u^5 z^2+12 u^3-6 u^2 z-2 u z+z^2$. The polynomial $K$ has  $c=3$ small branches, $u_1(z),u_2(z),u_3(z)$,  with series (Puiseux) expansions near $z=0$:
{
\[
\begin{aligned}
u_1(z) &= 1/2  z- 1/16  z^6+O(z^7)\\
u_2(z) &=   \sqrt 6/6 z^{\frac 1 2}+O(z^{\frac 7 2})\\
u_3(z) &=- \sqrt 6/6 z^{\frac 1 2}+O(z^{\frac 7 2}).
\end{aligned}
\]}\noindent
Substituting these branches into \eqref{eq:MainEq} yields a linear system. In this case, the system involves three unknowns $\widetilde{M}^d_{j,h}\replace{(z)}=(M^d_{1,0}\replace{(z)},M^d_{2,0}\replace{(z)},M^d_{2,1}\replace{(z)})$ and three independent equations:
{
\begin{empheq}[left=\empheqlbrace]{alignat=3}
{L(z,u_s(z),\widetilde{M}^d_{j,h}\replace{(z)}) =}\label{eq:systemrun}
{-1 + \frac{z\,M_{1,0}(z)}{2\,u_s(z)} + \frac{2\,u_s(z)\,z^2\,M_{2,0}(z)}{18u_s(z)^2 - 3z} + \frac{2\,u_s(z)^2\,z^2\,M_{2,1}(z)}{18u_s(z)^2 - 3z}}\nonumber\\
\pushleft{\text{($s=1,2,3$).}}\nonumber
\end{empheq}}\noindent
\ifmai
\left\{
\begin{array}{l}
L(z,u_s(z),\widetilde{M}^d_{j,h})= \\[4pt]
- \left( u_s(z)  \right) ^{d}\,+ \,\frac {z }{2\, u_s(z) }
M_{{1,0}}(z)\,+\,
\,{\frac {2\, {z}^{2}u_s(z)}{18\, \left( u_s(z)  \right)^{2}
+z}} M_{{2,0}}(z)\,+ \,\frac {2\, \left( u_s(z)  \right) ^{2}{z}^{2
}   }{18\, \left( u_s(z)  \right) ^{2}+z} M_{{2,1}}(z)=0\\[4pt]
\text{($s=1,2,3$)}
\end{array}
\right.
\fi
Solving this system yields explicit rational expressions in $z,u_s(z)$ for the $M^d_{j,h}(z)$'s (Line \ref{line:l3}). This in turn
allows us to compute a rational expression $Q=\Num/D$ for $\Delta^d(z)$ from   formula \eqref{eq:run3} in Example \ref{ex:run3} (Line \ref{line:l4}). The resulting expression is too large to be reported here. Expanding it into a Puiseux series, we obtain
{
$\Delta^0(z)=(1/2)z+(1/6){z}^{2}+(1/36){z}^{4}+{  ({7}/{108})}{z}^{5}+(1/36)\,{z}^{6}
+O \left( {z}^{7} \right).$}

\subsection{Section \ref{sec:algan}: additional details on Algorithm \ref{alg:radius}.}
We provide additional details on step 8 and step 9 of Algorithm \ref{alg:radius}.

\begin{itemize}
\item
\textbf{Step 8: analytic connection of branches}.
This step reduces to the problem of identifying  the   branches  of $K(z,u)$ at $z=\zeta$ that are analytically connected to the (small) branches $u(z)$ at $z=z_0$ near 0, as per Definition \ref{def:conn}.  There are several numerical-symbolic algorithm to do this.  One chooses a \emph{smooth} path
$\gamma:[0,1)\to \{z:|z|<|\zeta|\}\setminus (\Xi[K]\cup\Lambda[K,D])$ connecting $z_0$ to $\zeta$.
Then one defines an ODE for $u(\gamma(t))$, obtained by   differentiating $K(\gamma(t),u(\gamma(t)))=0$ (implicit differentiation):
{
$$
\frac{\mathrm{d}}{\mathrm{d}t}\,u(\gamma(t))
= -\frac{\partial_z K(\gamma(t),u(\gamma(t)))}{\partial_u K(\gamma(t),u(\gamma(t)))}\cdot \gamma'(t).
$$}\noindent
Numerical integration from  $t=0$  to $t\to 1^{-}$ then tracks the corresponding solutions --- one for each initial value $u_0$ at $t=0$, that is, for each distinct root of $K(z_0,u)=0$, assuming $z_0\notin \Xi[K]$.
In this way, one connects the roots of $K(0,u)$ to those of $K(1,u)$ along the branches of $K(z,u)$.
This continuation problem is well studied in numerical algebraic geometry, under the name of  \emph{Homotopy Continuation}.
Algorithms giving guarantees,  and related effective   tools    such as the \emph{Bertini} package, are available: see e.g.  \cite{bates2023numerical,duff2023polynomial} and references therein.

\item \textbf{Step 10: deciding analyticity of $S_\zeta(s)$}.
Let \(H(z,y)\in\C[z,y]\) be a square-free annihilator polynomial for the algebraic function  $Q(z,u_1(z),...,u_c(z))$; $H$ can    in principle be computed from $Q$ and $K$ using Groebner bases  for elimination ideals, or resultants \cite[App.B.1]{Flajo}; but we need not  actually carry out this computation. Let $B>0$ be an upper bound on the (finite)  \emph{regularity index} of $H$ from $z=\zeta$: the least integer $n_r$ with the property that no two distinct Puiseux
series   at $H$ from $z=\zeta$, truncated up to and including index $n_r$, are the same \cite{walsh2000puiseux}. Walsh \cite[Sect.2]{walsh2000puiseux} gives the upper bound $n_r\leq B:=4\deg_z(H)(\deg_y(H))^2$.
%
Now
  $S_\zeta(z)= Q(z,S_1(z),...,S_c(z))=\sum_{n\ge n_0}c_n(z-\zeta)^{n/\kappa}$ is  a Puiseux branch of $H$ at $z=\zeta$. Consider the truncation of $S_\zeta(s)$ up to   and including index $B$. If $n_0<0$ or a fractional exponent appears in this truncation, we can of course conclude  that $S_\zeta(s)$ is non analytic.  {Otherwise}    we can conclude that $S_\zeta(s)$ is analytic. By contradiction, suppose not. Then $S_\zeta(z)$ has a fractional exponent at some index $>B$:  let $n_*$ be the minimal such index. We must have $n_*>B$,   $c_{n_*}\neq 0$ and ${n_*}\bmod \kappa\neq 0$. Consider any $\kappa$-th root of unity  $\omega$ such that $\omega^{n_*}\neq 1$, which must exist since ${n_*}\bmod \kappa\neq 0$. Let $\tilde S(z):=  
  \sum_{n\ge n_0}c_n\omega^{n}(z-\zeta)^{n/\kappa}$. Putting $z=\zeta+ u^\kappa$ for a new variable $u$, we have: $0=H(z,S_\zeta(z))=H(\zeta+ u^\kappa,S_\zeta(\zeta+ u^\kappa))=H(\zeta+ u^\kappa,S_\zeta(\zeta+ (\omega u)^\kappa))=H(\zeta+ u^\kappa,\tilde S(\zeta+ u^\kappa))=H(z,\tilde S(z))$. So $\tilde S(z)$ is also a Puiseux branch of $H$ at $z=\zeta$, and moreover $\tilde S(z)\neq   S_\zeta(z)$, as $c_{n_*}\neq c_{n_*}\cdot \omega^{n_*}$. However $c_{n}= c_{n}\cdot \omega^{n}$ for all nonzero $c_n$'s with $n\leq B$,   as  such $n$'s are multiple of $\kappa$ hence  $\omega^{n}=1$. This contradicts the fact that $B$ is an upper bound on the regularity index of   Puiseux series of $H$ at $z=\zeta$. We finally note that a (conservative, exponential) upper bound on the total degree of $H$, hence on its $\deg_z$ and $\deg_y$, can be computed from the maximal total degrees and number of variables involved in the polynomials generating the elimination ideal   (i.e. $K$ and  numerator and denominator of $Q$), using structural results on Groebner bases: cf. the Dubé bounds discussed in \cite{hashemi2022macaulay}. These can be much improved using results based on Bezout theorem \cite{fischer2001plane}. We omit any detail and leave such refinements for future work.
\ifmai
 . This computation needs to be carried out   once, not for each $\zeta$.  Next, apply the Newton's polygon algorithm to
\(H\) at \(z=\zeta\), thereby obtaining all local Puiseux branches
\(Y_1(z),\dots,Y_r(z)\) of \(H(z,y)=0\) at \(z=\zeta\). For each branch \(Y_j\), the
Newton's polygon computation yields a local Puiseux parametrization, from which one
obtains its exact ramification index \(\kappa_j\) and initial coefficient index
\(n_0^{(j)}\) \cite{Duval1989}. Since any two distinct local Puiseux branches have
finite contact order, there exists \(q\) such that the truncations of the \(Y_j\)'s to
order \(q\) are pairwise distinct. We then expand
\(S_\zeta(z)=Q(z,S_1(z),\dots,S_c(z))\) to the same order \(q\), using as many terms from the $S_i(z)$ as needed, and select the unique branch
\(Y_j\) whose truncation matches that of $S_\zeta(z)$. The required values \(n_0\) and \(\kappa\) are
then \(n_0^{(j)}\) and \(\kappa_j\). The effectiveness of this procedure follows from
the Newton's polygon algorithm together with effective equality tests for algebraic
series \cite{Buchacher2026}.
\fi
\end{itemize}

\subsection{Section \ref{sec:ext}: simplification}
Consider the  system  \eqref{eq:Meq}, defining the vector $M^d(z,u)$ of meander gf's. Assuming  this system has a unique solution, i.e. that $I-zA(u)$ is nonsingular, one can build an explicit equation involving just one of the components, say $M^d_1(z,u)$. This equation may involve \emph{fewer} functions $M^d_{j,h}(z)$ than the whole system, hence it may be easier to solve: a smaller number of small branches may be sufficient to solve the a  \emph{reduced}  version of the affine system \eqref{eq:system}. Moreover,  one can try to express some of the  $M^d_{j,h}(z)$  for $j\neq 1$  in terms of $M_1(z,u)$, using the following facts: (a) solving partially the system \eqref{eq:Meq},    the $M^d_j(z,u)$'s for $j\neq 1$ can be expressed as $M^d_j(z,u)=\alpha(z,u)\cdot M^d_1(z,u)+\beta(z,u)$, for suitable coefficients
$\alpha$ and $\beta$ (in general, involving some of the $M^d_{j,h}(z)$); (b) for each $j$, the univariate meanders gf's can  all be expressed in terms of the bivariate one
{
\begin{align*}
M^d_{j,h}(z)&= \frac 1{h!} \left[\frac{\partial^h}{\partial u^h}M^d_j(z,u)\right]_{u=0}
\end{align*}}\noindent
(for the proof, see \cite{Flajo}, or simply note that $M^d_{j,h}(z,u)=M^d_{j,0}(z)+ u\cdot M^d_{j,1}(z)+ u^2\cdot M^d_{j,2}(z)+\cdots$).

For the program in Section \ref{sec:ext}, the single equation for $M^d_1(z,u)$ is
{
\begin{align*}
M^d_{{1}} \left( z,u \right) &={\frac {4\,u{z}^{2}-{z}^{2}-6\,z}{8\,{u}^{2}z-12\,u}} M^d_{{1}} \left( z,u \right)+L(z,u,M^d_{1,0}(z), M^d_{2,0}(z), M^d_{2,1}(z))
\end{align*}}\noindent
for \replace{an $L$,} a rational function of its arguments, linear in the $M^d_{j,h}(z)$'s. After some algebra, this is equivalent to{
\begin{align*}
\frac{1}4\,{\frac {8\,{u}^{2}z-4\,u{z}^{2}+{z}^{2}-12\,u+6\,z}{u \left( 2\,u
z-3 \right) }}
&= L(z,u,M^d_{1,0}(z), M^d_{2,0}(z), M^d_{2,1}(z))
\end{align*}}\noindent
with a kernel polynomial: $K(z,u)=8\,{u}^{2}z-4\,u{z}^{2}+{z}^{2}-12\,u+6\,z$.
Exploiting the fact that  $M^d_2(z, u) = -\frac 3 2 z u \frac{M^d_1(z, u)}{u z-3}$, after some calculus and some algebra one can express: $M^d_{2,0}(z) = 0$ and $M^d_{2, 1}(z) = \frac 1 2 z M^d_{1,0}(z)$. Taking these two equalities into account, and inserting the small branch $u_1(z)$ in the above equation, we get an a one equation system, where $M^d_{1,0}(z)$ still occurs linearly:
{\begin{align*}
L(z,u_1(z),M^d_{1,0}(z),0, \frac 1 2z M^d_{1,0}(z))&=0\,.
\end{align*}}\noindent
This can be solved linearly for $M^d_{1,0}(z)$, yielding:
{\begin{align*}
M^d_{1,0}(z)&={\frac {24\, \left( u_{{1}} \left( z \right)  \right) ^{d+3}z-36\,
\left( u_{{1}} \left( z \right)  \right) ^{d+2}}{u_{{1}} \left( z
\right) z \left( z+6 \right)  \left( 2\,u_{{1}} \left( z \right) z-3
\right) }}
\end{align*}}\noindent
From this, $M^d_{2,0}(z)$ and $M^d_{2,1}(z)$ can be found, then $M^d_{1}(z,u),M^d_{2}(z,u)$ and finally $M^d_2(z)=M^d_2(z,1)$. The final formula for $\Delta^d$, in terms of $u_1(z)$, is
{
\begin{align*}
\Delta^d(z)&={\frac { \left( 2\,{z}^{2}-14\,z+12 \right)  \left( u_{{1}} \left( z
\right)  \right) ^{d+1}+{z}^{2}}{3\,{z}^{2}-14\,z+12}}
\\
&=:Q(z,u_1(z))\,.
\end{align*}}\noindent
The rational expression $Q(z,u_1(z))$ and the small branch of $K(z,u)$ can be subjected to the analysis described in Section  \ref{sec:alg} and yields  the results illustrated in Section \ref{sec:ext}.



\end{document}